\documentclass[conference]{IEEEtran}
\IEEEoverridecommandlockouts
\usepackage{hyperref}
 \usepackage{lipsum,graphicx,multicol}
 \usepackage{comment}
\usepackage{color,soul}
\usepackage{mathtools}
\usepackage{graphicx}
\usepackage[top=0.75in, bottom=1.05 in, left=0.64in, right=0.64in]{geometry}
\usepackage{color,soul}
\usepackage{setspace}
\usepackage{multirow}
\usepackage{tablefootnote}
\usepackage[norelsize, linesnumbered, ruled, lined, boxed, commentsnumbered]{algorithm2e}
\usepackage[normalem]{ulem}
\usepackage{comment}
\usepackage[font=small,labelfont=bf]{caption}
\usepackage{subcaption}
\usepackage{bm, amsmath, amssymb, amsthm}
\usepackage{amsfonts}
\usepackage {amssymb,amsmath,xcolor}
\usepackage{booktabs}
\usepackage{cite}
\usepackage{array}
\usepackage[acronym]{glossaries}
\usepackage{cases}
\usepackage{cleveref}
\usepackage{algorithmic}

\usepackage{amsthm}
\usepackage{array}
\def\BibTeX{{\rm B\kern-.05em{\sc i\kern-.025em b}\kern-.08em
    T\kern-.1667em\lower.7ex\hbox{E}\kern-.125emX}}
    
  \newtheorem{theorem}{Theorem}%[section]

\newtheorem{problem}{Problem}
\newtheorem{corollary}{Corollary}
\newtheorem{proposition}{Proposition}
\newtheorem{lemma}{Lemma}
\newtheorem{remark}{Remark}

\newcommand{\bieee}{\begin{IEEEeqnarray}{rCl}}
\newcommand{\eieee}{\end{IEEEeqnarray}}
\usepackage{amsthm,mdframed,calc}
 \usepackage{float}
\usepackage{diagbox}
\SetKwInput{KwInput}{Input}                % Set the Input
\SetKwInput{KwOutput}{Output}              % set the Output
\title{Basis Selection For Analog Coded Computing Under Differential Privacy Constraints}

\author{\IEEEauthorblockN{Aaditya Sharma, Rimpi Borah, and J. Harshan}
\IEEEauthorblockA{Department of Electrical Engineering, Indian Institute of Technology Delhi, India.}}

\begin{document}
\maketitle
\begin{abstract}
In analog coded computing, encoding basis functions are typically chosen to guarantee exact reconstruction and improve numerical stability. However, the same basis functions also determine the statistical geometry of the observations available to colluding workers, and hence directly influence the privacy of the underlying data. This suggests that basis selection should not be viewed solely as a reconstruction or numerical conditioning problem, but rather as a joint privacy and accuracy co-design problem. Motivated by this observation, we revisit Analog Lagrange Coded Computing (ALCC) and show that the full orthogonality constraints imposed by the conventional Lagrange basis are stronger than what is required for exact recovery. We retain only the interpolation constraints necessary for the data bearing basis functions and require the noise bearing basis functions merely to vanish at the data recovery points. This yields a parameterized family of generalized polynomial encoding bases that preserves exact reconstruction while introducing additional design degrees of freedom. We then derive a basis-dependent characterization of differential privacy for colluding workers and formulate a privacy-calibrated numerical optimization problem that jointly selects the data and noise basis functions to reduce reconstruction error while satisfying a prescribed \((\epsilon,\delta)\)-DP requirement. The resulting framework therefore treats the encoding basis itself as a privacy-design variable, rather than selecting a fixed closed-form basis a priori. The same design philosophy is also compatible with alternative privacy metrics such as mutual-information leakage. Numerical results show that, at the same privacy level, optimized basis functions substantially reduce floating-point reconstruction error relative to conventional ALCC.

\begin{IEEEkeywords}
Analog Lagrange coded computing, differential privacy, mutual information security, numerical stability.
\end{IEEEkeywords}
\end{abstract}
\section{Introduction}
\noindent
Distributed coded computing has emerged as an effective framework for
large scale computation, where a master node distributes coded shares of
data among multiple workers and reconstructs the desired computation from
their returned results
\cite{lcc,b1,analogPrivacy,polycodes,b8,b2}.
In the analog setting, two key challenges arise:
\emph{finite-precision reconstruction error} and
\emph{privacy leakage} to honest-but-curious or colluding workers.
Existing works address these issues through coding structure,
interpolation-node design, and artificial noise injection. For instance,
Chebyshev nodes improve numerical stability \cite{b2}, while ALCC
\cite{b1} characterizes privacy through mutual information security.
These observations highlight the privacy-accuracy tradeoff inherent in
analog coded computing.

A key observation of this work is that the \emph{encoding basis itself}
influences both sides of this tradeoff. It affects the numerical
amplification experienced during distributed computation while also
shaping the statistical geometry of the observations available to
colluding workers. 
%Hence, two schemes that are equivalent from the viewpoint of exact recoverability need not be equivalent in either numerical accuracy or privacy. 
Hence, rather than fixing a basis and
subsequently analyzing its performance (which is usually what is done till now), we therefore treat the basis
itself as a \emph{joint privacy--accuracy design variable}.
\begin{figure}[t]
    \centering
 \includegraphics[width=0.75\columnwidth]{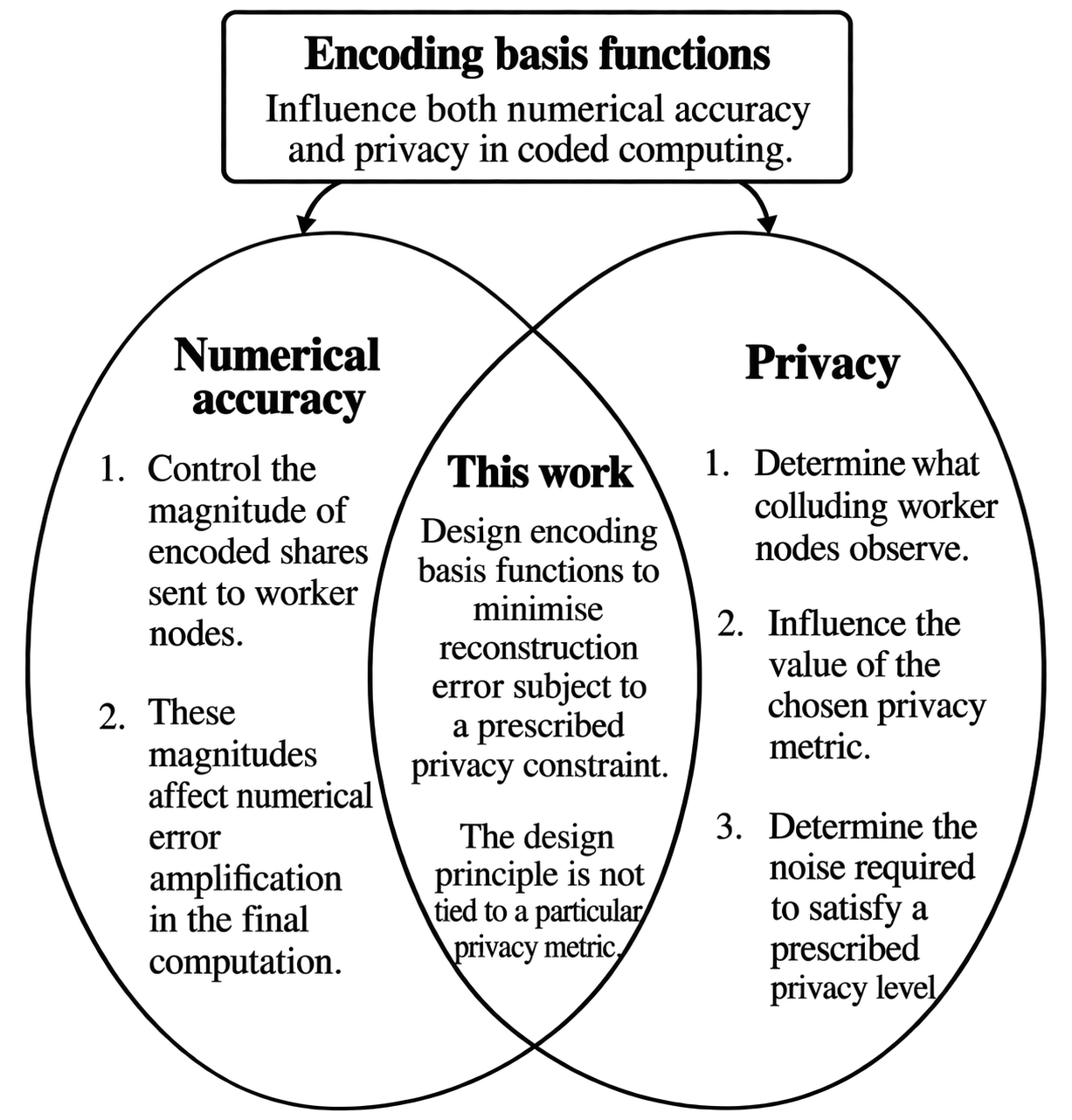}
    \vspace{-1.5mm}
    \caption{Privacy--accuracy co-design of the encoding basis.}
    \label{fig:privacyAccuracyVenn}
    \vspace{-0.4cm}
\end{figure}
As illustrated in Fig.~\ref{fig:privacyAccuracyVenn}, the encoding basis
is the common design object underlying both objectives: its numerical
behavior influences reconstruction accuracy, while the encoded shares
it produces determine the information exposed to colluding workers.
Their intersection therefore represents the central premise of this
work, i.e., the encoding basis should be designed for accuracy and privacy
jointly, rather than for recoverability alone.

We show that exact reconstruction does not require the full interpolation structure imposed by the conventional Lagrange basis, leaving exploitable degrees of freedom in the encoding design. In particular, orthogonality should only be maintained across the data bearing basis functions, and the noise bearing basis functions merely need to vanish at the encoding points. Therefore, we can parameterize the resulting family of recovery-preserving polynomial bases, characterize their effect on $(\epsilon,\delta)$-differential privacy and finite-precision reconstruction accuracy, and optimize the basis under a prescribed privacy requirement. Thus, privacy does not merely constrain the code after its selection, it participates in the design of the code itself. Numerical results show that, at the same privacy level, the optimized bases achieve substantially lower floating-point reconstruction error than conventional Lagrange-based ALCC.

\section{Background AND Proposed Framework}
\label{sec:proposed framwork}

\subsection{Preliminaries on the ALCC Framework}
We consider the ALCC-based distributed computing framework \cite{b1}
comprising a master server connected to $N$ worker nodes through
dedicated communication links such that the master node communicates
directly with each worker, while there is no direct communication among
the workers themselves. Let the dataset held by the master node be
$\mathbf{X}=[\mathbf{X}_1,\mathbf{X}_2,\dots,\mathbf{X}_K]$, where
$\mathbf{X}_j\in\mathbb{R}^{m\times n}$ for all $j\in[K]$. The entries
of the dataset are independently sampled from an unknown probability
distribution with bounded support over the interval $[-r,r]$, for some
$r\in\mathbb{R}^{+}$. The objective of the framework is to compute a polynomial function
$f:\mathbb{R}^{m\times n}\rightarrow\mathbb{R}^{u\times h}$ of degree
at most $D$ over the dataset by leveraging the computational power of
$N$ distributed workers. The function $f(\mathbf{X})$ is said to be a
degree-$D$ polynomial function if each entry of $f(\mathbf{X})$ is a
multivariate polynomial in the entries of $\mathbf{X}$ with total degree
at most $D$. Furthermore, up to $t<N$ workers may collude and attempt to
infer information about the dataset from their received shares. To distribute the shares of the dataset among the workers, the ALCC framework in \cite{b1} employs a secret-sharing-based encoding strategy
\cite{b5}. Specifically, the master constructs the encoding polynomial

\begin{equation}
\mathbf{u}(z)
=
\sum_{j=1}^{K}\ell_j(z)\mathbf{X}_j
+
\sum_{r=1}^{t}\ell_{K+r}(z)\mathbf{N}_r,
\label{eq:alcc_encoding}
\end{equation}

\noindent where $\ell_j(z)$ denotes the $j$th Lagrange basis polynomial, defined as
$\ell_j(z)=
\prod_{\substack{\ell\in[K+t],\,\ell\neq j}}
\frac{z-\beta_\ell}{\beta_j-\beta_\ell}$.
The corresponding interpolation points are chosen as scaled roots of
unity,
$\beta_k=\beta\exp\!\left(\frac{2\pi i(k-1)}{K+t}\right)$,
for $k\in[K+t]$, where $\beta\in\mathbb{R}^{+}$ is a scaling
parameter. The matrices $\{\mathbf{N}_r\}_{r=1}^{t}$ in \eqref{eq:alcc_encoding} denote the noise matrices introduced to protect the underlying dataset. Their entries are independently drawn from a circularly symmetric complex Gaussian distribution with variance $\sigma_n^2/t$, i.e., $[\mathbf{N}_r]_{ab}\sim \mathcal{CN}(0,\sigma_n^2/t)$. Once the encoding polynomial $\mathbf{u}(z)$ is constructed, the master
evaluates it at the worker evaluation points
$\alpha_i=\exp\!\left(\frac{2\pi i(i-1)}{N}\right)$,
$i\in[N]$, which are chosen as the $N$th roots of unity. The $i$th
worker receives the encoded share $\mathbf{u}(\alpha_i)$, computes
$f(\mathbf{u}(\alpha_i))$, and returns the resulting computation to the
master node.

Since each Lagrange basis polynomial in \eqref{eq:alcc_encoding} has degree at most $K+t-1$, the encoding polynomial satisfies $\deg(\mathbf{u}(z))\leq K+t-1$. Since $f(\cdot)$ has degree at most $D$, the composition $f(\mathbf{u}(z))$ is therefore a polynomial in $z$ of degree at most $\widetilde{D}=D(K+t-1)$. Consequently, under infinite-precision arithmetic, after receiving the computations of any $\widetilde{D}+1=D(K+t-1)+1$ workers, the master first interpolates the polynomial $f(\mathbf{u}(z))$ from the corresponding evaluations $\{f(\mathbf{u}(\alpha_i))\}$.
After reconstructing $f(\mathbf{u}(z))$, the master evaluates the interpolated polynomial at the first $K$ encoding points $\{\beta_j\}_{j=1}^{K}$. The Lagrange basis polynomials satisfy the orthogonality property $\ell_k(\beta_j)=\delta_{kj}$ for $k\in[K+t]$ and $j\in[K+t]$. Hence, for every $j\in[K]$, $\mathbf{u}(\beta_j)=\mathbf{X}_j$, which immediately gives $f(\mathbf{u}(\beta_j))=f(\mathbf{X}_j)$. Therefore, under infinite-precision arithmetic, the master exactly recovers the desired outputs $\{f(\mathbf{X}_j)\}_{j=1}^{K}$.

\begin{remark}
\label{rem:recovery_freedom}
Conventional ALCC imposes $\ell_k(\beta_j)=\delta_{kj}$ over all $K+t$ interpolation points. However, exact recovery only requires $\phi_k(\beta_j)=\delta_{kj}$ for $k,j\in[K]$ and $\psi_r(\beta_j)=0$ for $r\in[t],\,j\in[K]$. These conditions ensure $\mathbf{u}(\beta_j)=\mathbf{X}_j$ and hence $f(\mathbf{u}(\beta_j))=f(\mathbf{X}_j)$. Thus, the full $(K+t)$-point Lagrange orthogonality structure is stronger than necessary, relaxing the remaining constraints exposes additional basis-design degrees of freedom without altering the recovery condition.
\end{remark}
%\begin{remark}
%\label{rem:minimal_recovery_conditions}
%Conventional ALCC imposes$\ell_k(\beta_j)=\delta_{kj}$ over all $K+t$ interpolation points. However, exact recovery only requires 5$\phi_k(\beta_j)=\delta_{kj}$ for $k,j\in[K]$ and $\psi_r(\beta_j)=0$ for %$r\in[t],\,j\in[K]$. These conditions alone ensure $\mathbf{u}(\beta_j)=\mathbf{X}_j$ and hence $f(\mathbf{u}%(\beta_j))=f(\mathbf{X}_j)$. \end{remark}

%\begin{remark}
%\label{rem:orthogonality_freedom}
%Thus, the full $(K+t)$-point Lagrange orthogonality structure is stronger than necessary for exact reconstruction. In particular, the noise bearing basis functions need only vanish at the $K$ data recovery points. Relaxing the remaining constraints exposes additional basis design degrees of freedom without changing the recovery condition. \end{remark}

\subsection{Parameterized Basis Design}

Motivated by Remark~\ref{rem:recovery_freedom}, we parameterise the ALCC encoding in
\eqref{eq:alcc_encoding} by replacing the fixed Lagrange basis polynomials with arbitrary polynomial basis functions that preserve only the orthogonality conditions required for exact recovery.

In this context, let $\{\phi_j(z)\}_{j=1}^{K}$ denote the basis functions associated with
the data matrices $\{\mathbf{X}_j\}_{j=1}^{K}$, and let
$\{\psi_r(z)\}_{r=1}^{t}$ denote the basis functions associated with the noise matrices $\{\mathbf{N}_r\}_{r=1}^{t}$. We restrict
these basis functions to polynomials of degree at most $K+t-1$ so that
the degree of the encoding polynomial, and hence the recovery threshold,
remain unchanged from conventional ALCC. The generalized encoding
polynomial is given by

\begin{equation}
\mathbf{u}(z)
=
\sum_{j=1}^{K}\phi_j(z)\mathbf{X}_j
+
\sum_{r=1}^{t}\psi_r(z)\mathbf{N}_r.
\label{eq:New encoding}
\end{equation}

To preserve the original reconstruction mechanism, the proposed basis
functions are required to satisfy
$\phi_k(\beta_j)=\delta_{kj}$ for all $k,j\in[K]$ and
$\psi_r(\beta_j)=0$ for all $r\in[t]$ and $j\in[K]$. These conditions
ensure that $\mathbf{u}(\beta_j)=\mathbf{X}_j$ and hence
$f(\mathbf{u}(\beta_j))=f(\mathbf{X}_j)$ for every $j\in[K]$. Note that, unlike conventional ALCC, which employs $K+t$ equally spaced
interpolation points on the circle of radius $\beta$, the generalized
construction requires only $K$ data-recovery points. In this work, these
are chosen as
$\beta_j=\beta\exp\!\left(2\pi\mathrm{i}(j-1)/K\right)$,
$j\in[K]$, i.e., as $K$ equally spaced points on the same circle.

We next characterize the polynomial basis family satisfying these
relaxed orthogonality conditions. Define

\begin{equation}
q(z)=\prod_{\ell=1}^{K}(z-\beta_\ell),
\label{qz}
\end{equation}

which satisfies $q(\beta_j)=0$ for every $j\in[K]$. Let
$\widetilde{\ell}_j(z)$ denote the $j$th Lagrange basis polynomial
constructed only over the $K$ data-recovery points
$\{\beta_\ell\}_{\ell=1}^{K}$, i.e.,
$\widetilde{\ell}_j(z)=
\prod_{\substack{\ell\in[K],\,\ell\neq j}}
\frac{z-\beta_\ell}{\beta_j-\beta_\ell}$.
By construction,
$\widetilde{\ell}_k(\beta_j)=\delta_{kj}$ for all $k,j\in[K]$. The basis functions correspond to the dataset can be parameterized as

\begin{equation}
\phi_j(z)=\widetilde{\ell}_j(z)
+
q(z)\eta_j(z),
\qquad j\in[K],
\label{phi basis1}
\end{equation}

while the basis functions correspond to noise matrices can be parameterized as

\begin{equation}
\psi_r(z)
=
q(z)\xi_r(z),
\qquad r\in[t],
\label{psi basis}
\end{equation}

where
$\eta_j(z),\xi_r(z)\in\mathcal{P}_{t-1}$ and
$\mathcal{P}_{t-1}$ denotes the space of polynomials of degree at most
$t-1$.

The form in \eqref{phi basis1} follows from the fact that
$\phi_j(z)-\widetilde{\ell}_j(z)$ vanishes at every
$\beta_\ell$, $\ell\in[K]$. Therefore,
$q(z)$ must divide
$\phi_j(z)-\widetilde{\ell}_j(z)$, and hence it can be expressed as
$q(z)\eta_j(z)$. Similarly, since every admissible
$\psi_r(z)$ vanishes at all $K$ data-recovery points, $q(z)$ must divide
$\psi_r(z)$, giving the representation in \eqref{psi basis}. Since all
basis functions are restricted to have degree at most $K+t-1$, the quotient polynomials $\eta_j(z)$ and $\xi_r(z)$ must have degree at most $t-1$. Thus, \eqref{phi basis1}--\eqref{psi basis} do not merely describe a
particular alternative to the conventional Lagrange construction.
Rather, they parameterize the recovery-preserving polynomial basis
family considered in this work. The coefficients of
$\{\eta_j(z)\}_{j=1}^{K}$ and $\{\xi_r(z)\}_{r=1}^{t}$ therefore form
the additional design variables through which the numerical accuracy
and privacy properties of the distributed computation can be controlled
while exact recoverability is preserved by construction. Therefore, the remaining stages of the protocol and the recovery
threshold remain identical to conventional ALCC. Thus, under
infinite-precision arithmetic, the proposed construction preserves both
the exact recovery property and the recovery threshold of the original
ALCC framework.

In practical floating-point implementations, however, the distributed
computation and polynomial interpolation introduce numerical errors.
Let $\widehat{f(\mathbf{X}_j)}$ denote the reconstructed output obtained
using the proposed framework, while $f(\mathbf{X}_j)$ denotes the
corresponding exact centralized computation. Therefore, we next characterize the finite-precision error induced by the generalized basis functions in
\eqref{eq:New encoding}. We then show that the same basis functions also
determine the statistical observations available to colluding workers
and, consequently, the privacy guarantees of the framework. This common
dependence on the encoding basis forms the foundation of the
privacy--accuracy co-design developed in the remainder of the paper.

%%%%%%%%%%%%%%%%%%%%%%%%%%%%%%%%%%%%%%%%%%%%%%%%%%%%%%%%%%%%%%%%%%%%%%%%%%%%%%%%%%%%%%%%%%
\section{Accuracy and Privacy Analysis under Generalized Basis}
\label{sec:accuracy and privacy}
In this section, we analyze the accuracy and privacy guarantees of the proposed framework under generalized basis functions defined in \eqref{phi basis1} and \eqref{psi basis}. 
%%%%%%%%%%%%%%%%%%%%%%%%%%%%%%%%%%%%%%%%%%%%%%%%%%%%%%%%%%%%%%%%%%%%%%%%%%%%%%
\subsection{Accuracy Characterization}
\label{sec:Accuracy analysis}

We characterize the finite-precision reconstruction error under the generalized basis functions in \eqref{phi basis1} and \eqref{psi basis} by carrying out similar analysis in \cite[Theorem 4]{b1}. Recall that the entries of the artificial noise matrices are independently drawn according to
$(\mathbf N_r)_{pq}
    \sim
    \mathcal{CN}
    \left(
        0,\frac{\sigma_n^2}{t}
    \right), r\in[t].$
For $\theta>0$, define the event
$\mathcal E_\theta =
\left\{\max_{r\in[t]}\|\mathbf N_r\|_{\max}
\leq\frac{\theta\sigma_n}{\sqrt t}\right\}.$
Further, define

\begin{equation}
\label{eq:M_a}
M=\max_{i\in[N]}\left(r\sum_{j=1}^{K}|\phi_j(\alpha_i)|+\frac{\theta\sigma_n}{\sqrt t}
\sum_{r=1}^{t}|\psi_r(\alpha_i)|\right).
\end{equation}

Let $\widetilde D=(K+t-1)D$, and let
$\mathcal S=\{i_1,\ldots,i_{\widetilde D+1}\}\subseteq[N]$
denote the set of workers used for reconstruction. We denote the
corresponding Vandermonde matrix by
$\mathcal V\in\mathbb C^{(\widetilde D+1)\times(\widetilde D+1)}$,
whose entries are $[\mathcal V]_{a,\ell+1}
    =
    \alpha_{i_a}^{\ell},
    \
    a\in[\widetilde D+1],\quad
    \ell=0,\ldots,\widetilde D.$
Also, define
$\Theta_\beta=
\frac{\beta^{2(\widetilde D+1)}-1}{\beta^2-1}.$
For each output entry $(p,q)$, let $\mathbf a_{pq}$ and
$\widehat{\mathbf a}_{pq}$ denote the exact and computed coefficient
vectors, respectively, of the polynomial reconstructed from the worker
evaluations. As in the finite precision interpolation model considered
in \cite{b1}, we assume
\begin{equation}
    \|
        \widehat{\mathbf a}_{pq}
        -
        \mathbf a_{pq}
    \|_2
    \leq
    \kappa_2(\mathcal V)
    \|\mathbf a_{pq}\|_2
    2^{-b_m},
    \label{eq:finite_precision_interpolation}
\end{equation}
where $b_m$ denotes the floating-point precision in bits.

\begin{theorem}
\label{th:accuracy}
Let \(f:\mathbb{R}^{m\times n}\rightarrow\mathbb{R}^{u\times h}\)
be a polynomial function of degree \(D\), and assume that the magnitude
of its coefficients is bounded by $c$. Further, assume that
$\|\mathbf X_j\|_{\max}\le r$ for all \(j\in[K]\), and
$(emn)M\geq1$. Then, for all $j\in[K]$, $p\in[u]$, and $q\in[h]$
such that $f(\mathbf X_j)_{pq}\neq0$,

\begin{equation}
\label{accuracy bound}
\begin{aligned}
&\frac{\bigl|\widehat{f(\mathbf X_j)}_{pq}
-
f(\mathbf X_j)_{pq}\bigr|}
{\bigl|f(\mathbf X_j)_{pq}\bigr|} \\
&\quad \leq
\frac{\sqrt{\Theta_\beta}\,
\kappa_2(\mathcal V)\,
\sqrt{\widetilde D+1}}
{\sigma_{\min}(\mathcal V)\,
\bigl|f(\mathbf X_j)_{pq}\bigr|}
\\
&\qquad \times
c(D+1)(emn)^D M^D\,2^{-b_m}.
\end{aligned}
\end{equation}

with probability at least
$\left(1-e^{-\theta^2}\right)^{tmn}
\geq 1-tmn e^{-\theta^2}$.
\end{theorem}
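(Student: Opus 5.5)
The argument follows \cite[Theorem 4]{b1}, with the Lagrange basis replaced by $\{\phi_j\},\{\psi_r\}$ wherever it appears.

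\textbf{Step 1: probability of the good event.} The entries of $\mathbf N_r$ are i.i.d.\ $\mathcal{CN}(0,\sigma_n^2/t)$. Hence $|(\mathbf N_r)_{pq}|^2\,t/\sigma_n^2$ is exponential with unit mean, and
\[
\Pr\bigl(|(\mathbf N_r)_{pq}|>\theta\sigma_n/\sqrt t\bigr)=e^{-\theta^2}.
\]
By independence over the $tmn$ entries, $\Pr(\mathcal E_\theta)=(1-e^{-\theta^2})^{tmn}$. Bernoulli's inequality then gives the lower bound $1-tmn\,e^{-\theta^2}$. The rest of the argument is conditioned on $\mathcal E_\theta$.

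\textbf{Step 2: bound the coefficient vector.} On $\mathcal E_\theta$, for every worker $i$ the triangle inequality gives
\[
\|\mathbf u(\alpha_i)\|_{\max}\le r\sum_j|\phi_j(\alpha_i)|+\frac{\theta\sigma_n}{\sqrt t}\sum_r|\psi_r(\alpha_i)|\le M .
\]
Each entry of $f$ is a polynomial with at most $\sum_{d=0}^{D}\binom{mn+d-1}{d}$ monomials of degree $d\le D$, and each coefficient has magnitude at most $c$. Using $\binom{mn+d-1}{d}\le (e\,mn)^d$ and the assumption $(emn)M\ge1$, every term satisfies $(emn)^dM^d\le (emn)^DM^D$. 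Therefore
\[
|f(\mathbf u(\alpha_i))_{pq}|\le c(D+1)(emn)^DM^D .
\]
These are the values of the degree-$\widetilde D$ polynomial $g_{pq}(z)=f(\mathbf u(z))_{pq}$ at the $\widetilde D+1$ chosen workers, so $\mathcal V\mathbf a_{pq}=\mathbf g_{pq}$. It follows that
\[
\|\mathbf a_{pq}\|_2\le\frac{\|\mathbf g_{pq}\|_2}{\sigma_{\min}(\mathcal V)}\le \frac{\sqrt{\widetilde D+1}\,c(D+1)(emn)^DM^D}{\sigma_{\min}(\mathcal V)}.
\]

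\textbf{Step 3: evaluation error.} The estimate is $\widehat{f(\mathbf X_j)}_{pq}=\sum_\ell \widehat a_{pq,\ell}\,\beta_j^\ell$ with $|\beta_j|=\beta$. By Cauchy--Schwarz,
\[
\bigl|\widehat{f(\mathbf X_j)}_{pq}-f(\mathbf X_j)_{pq}\bigr|\le\|\widehat{\mathbf a}_{pq}-\mathbf a_{pq}\|_2\Bigl(\sum_{\ell=0}^{\widetilde D}\beta^{2\ell}\Bigr)^{1/2}=\sqrt{\Theta_\beta}\,\|\widehat{\mathbf a}_{pq}-\mathbf a_{pq}\|_2 .
\]
Exact recovery from Remark~\ref{rem:recovery_freedom} and \eqref{phi basis1}--\eqref{psi basis} guarantees $g_{pq}(\beta_j)=f(\mathbf X_j)_{pq}$, so this difference is the true reconstruction error. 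Combining with \eqref{eq:finite_precision_interpolation} and Step 2, then dividing by $|f(\mathbf X_j)_{pq}|\neq0$, yields the bound in \eqref{accuracy bound}.

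The main obstacle is the monomial count in Step 2. I need the bound $\sum_{d\le D}\binom{mn+d-1}{d}(M)^d\le (D+1)(emn)^DM^D$, and this is exactly where $(emn)M\ge1$ enters. I would first try bounding each $\binom{mn+d-1}{d}$ by $(e(mn+d-1)/d)^d$ and absorbing the $d$ in the denominator. If that does not quite give $(emn)^d$, I would instead adopt the same monomial bound used in \cite{b1}. Everything else transfers verbatim, since the basis enters only through $M$.
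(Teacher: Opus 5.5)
Your proposal is correct and follows the paper's proof essentially step for step. The shared steps are the exponential tail of the $\mathcal{CN}(0,\sigma_n^2/t)$ entries (you use Bernoulli where the paper cites a union bound), the bound $\|\mathbf u(\alpha_i)\|_{\max}\le M$, the monomial count combined with $(emn)M\ge1$, the estimate $\|\mathbf a_{pq}\|_2\le\|\mathbf g_{pq}\|_2/\sigma_{\min}(\mathcal V)$, the finite-precision model, and Cauchy--Schwarz with $\sqrt{\Theta_\beta}$ at $\beta_j$. The step you flag as the main obstacle is not one: your first idea already works, since $(mn+d-1)/d\le mn$ for $d\ge1$ (indeed $\binom{mn+d-1}{d}\le (mn)^d$ directly), and this is the ``standard bound'' the paper simply invokes.
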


\begin{proof}
    See Appendix~\eqref{proof:theorem1}.
\end{proof}
\begin{remark}
\label{rem:M_accuracy}
Since the relative-error bound in \eqref{accuracy bound} scales as
$M^D$, reducing $M$ improves the basis-dependent component of the
accuracy bound. Hence, $M$ serves as the accuracy criterion for the
basis-design problem.
\end{remark}

The entrywise relative error can become large when
$\bigl|f(\mathbf X_j)_{pq}\bigr|$ is close to zero, even when the
absolute reconstruction error remains small. Therefore, in the numerical
evaluation we consider the normalized Frobenius error
$e_{\mathrm F}^{(j)}
=
\frac{
\|\widehat{f(\mathbf X_j)}-f(\mathbf X_j)\|_F
}{
\|f(\mathbf X_j)\|_F
}$,
for $\|f(\mathbf X_j)\|_F>0$, which is less sensitive to individual
near-zero output entries.

\begin{lemma}[Choice of $\theta$]
\label{lem:choice_theta}
For any prescribed failure probability $\zeta\in[0,1]$, and for given
system parameters $m$, $n$, and $t$, the minimum value of $\theta$ for
which the reconstruction error bound in \eqref{accuracy bound} holds
with probability at least $1-\zeta$ is
\begin{equation}
\theta_{\min}
=
\sqrt{
-\log\left(
1-(1-\zeta)^{1/(tmn)}
\right)
}.
\label{eq:theta_min}
\end{equation}
Consequently, any $\theta\geq\theta_{\min}$ guarantees that
\eqref{accuracy bound} holds with probability at least $1-\zeta$.
\end{lemma}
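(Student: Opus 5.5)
The plan is to start from the probability guarantee in Theorem~\ref{th:accuracy}. The theorem states that the bound \eqref{accuracy bound} holds with probability at least $\left(1-e^{-\theta^2}\right)^{tmn}$. The only property of $\theta$ that the theorem's proof uses is that the event $\mathcal E_\theta$ occurs. On that event every noise entry satisfies $|(\mathbf N_r)_{pq}|\le \theta\sigma_n/\sqrt t$, which is exactly what makes $M$ in \eqref{eq:M_a} an upper bound on the magnitude of every encoded share entry. So I would take $P(\theta):=\left(1-e^{-\theta^2}\right)^{tmn}$ as the guaranteed success probability. I would also recall briefly where it comes from. For $Z\sim\mathcal{CN}(0,\sigma_n^2/t)$, the quantity $|Z|^2 t/\sigma_n^2$ is exponential with unit mean, so $\Pr\{|Z|\le\theta\sigma_n/\sqrt t\}=1-e^{-\theta^2}$. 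The $tmn$ noise entries are independent, which gives $P(\theta)$.

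Next I would note that $P(\theta)$ is continuous and strictly increasing in $\theta>0$. It tends to $0$ as $\theta\to0^+$ and to $1$ as $\theta\to\infty$. The requirement ``\eqref{accuracy bound} holds with probability at least $1-\zeta$'' is therefore guaranteed by the theorem exactly when $P(\theta)\ge 1-\zeta$. By monotonicity, the set of admissible $\theta$ is a ray $[\theta_{\min},\infty)$. Its endpoint solves $P(\theta_{\min})=1-\zeta$, and this gives the ``consequently'' part at once.

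Solving the equation is a direct inversion. Taking the $tmn$-th root gives $1-e^{-\theta^2}=(1-\zeta)^{1/(tmn)}$. Then $e^{-\theta^2}=1-(1-\zeta)^{1/(tmn)}$, so $\theta^2=-\log\left(1-(1-\zeta)^{1/(tmn)}\right)$. Taking the positive square root yields \eqref{eq:theta_min}. The argument of the logarithm lies in $(0,1]$ for $\zeta\in(0,1]$, so the right-hand side is nonnegative and $\theta_{\min}$ is well defined.

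The main point needing care is what ``minimum'' means. The lemma gives the smallest $\theta$ for which the guarantee supplied by Theorem~\ref{th:accuracy} reaches $1-\zeta$; it does not claim the true probability of the bound holding is minimized. I would state this interpretation explicitly. I would also handle the edge cases. At $\zeta=0$ the formula gives $\theta_{\min}=+\infty$, consistent with no finite $\theta$ achieving certainty. At $\zeta=1$ it gives $\theta_{\min}=0$, which is a trivial case. I would also remark that the looser union-bound form $1-tmne^{-\theta^2}\ge1-\zeta$ would give the larger, sufficient value $\sqrt{\log(tmn/\zeta)}$. This confirms that using the exact product form is what makes \eqref{eq:theta_min} the minimum.
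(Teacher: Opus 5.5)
Your proposal is correct and follows the paper's proof: both reduce the requirement to $\left(1-e^{-\theta^2}\right)^{tmn}\ge 1-\zeta$ via the event $\mathcal E_\theta$, then invert it by taking the $tmn$-th root and a logarithm. Your extra remarks are accurate refinements that the paper leaves implicit: the monotonicity of the guarantee, reading ``minimum'' relative to the theorem's sufficient condition, the edge cases $\zeta\in\{0,1\}$, and the comparison with the looser threshold $\sqrt{\log(tmn/\zeta)}$.
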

\begin{proof}
    See Appendix\eqref{proof:lemma1}
\end{proof}

\begin{remark}
\label{rem:theta_interpretation}
For fixed system parameters $(m,n,t)$, Lemma~\ref{lem:choice_theta}
allows $\theta$ to be selected for a confidence level.
Choosing $\theta\geq\theta_{\min}$ ensures that the reconstruction-error
bound holds with probability at least $1-\zeta$. Increasing $\theta$
improves this confidence, but increases $M$ in \eqref{eq:M_a},
thereby loosening the error bound.
\end{remark}

\subsection{Privacy Characterization}
\label{sec:privacy analysis}
We next characterize the privacy guarantees of the generalized framework
using differential privacy (DP) \cite{b3,b3new,b9} and mutual information security (MIS).
A coalition of workers observes linear combinations of the data and
artificial noise, whose coefficients are determined by the generalized
basis functions. To make this dependence explicit, write
$\eta_j(z)=\sum_{n=0}^{t-1}a_{j,n}z^n$ and
$\xi_r(z)=\sum_{n=0}^{t-1}b_{r,n}z^n$, and collect their coefficients as
$\mathbf{A}=[a_{j,n}]\in\mathbb C^{K\times t}$ and
$\mathbf{B}=[b_{r,n}]\in\mathbb C^{t\times t}$.

Let $\mathcal T=\{i_1,\ldots,i_t\}\subset[N]$ denote a coalition of
$t$ workers. Define
$\mathbf x=\mathrm{vec}(\mathbf X_1,\ldots,\mathbf X_K)$,
$\mathbf n=\mathrm{vec}(\mathbf N_1,\ldots,\mathbf N_t)$, and
$\mathbf y_{\mathcal T}
=\mathrm{vec}(\mathbf Y_{i_1},\ldots,\mathbf Y_{i_t})$,
where $\mathbf Y_i=\mathbf u(\alpha_i)$. Further, define the basis
evaluation matrices

\begin{equation}
\label{PhiPsiMatrix}
\begin{aligned}
\mathbf{\Phi}_{\mathcal{T}}(\mathbf{A}) &=
\begin{bmatrix}
\phi_1(\alpha_{i_1}) & \cdots & \phi_K(\alpha_{i_1}) \\
\vdots & \ddots & \vdots \\
\phi_1(\alpha_{i_t}) & \cdots & \phi_K(\alpha_{i_t})
\end{bmatrix},
\\[2em]
\mathbf{\Psi}_{\mathcal{T}}(\mathbf{B}) &=
\begin{bmatrix}
\psi_1(\alpha_{i_1}) & \cdots & \psi_t(\alpha_{i_1}) \\
\vdots & \ddots & \vdots \\
\psi_1(\alpha_{i_t}) & \cdots & \psi_t(\alpha_{i_t})
\end{bmatrix}.
\end{aligned}
\end{equation}

The observations available to the coalition can therefore be written as

\begin{equation}
\label{coalition model}
\mathbf y_{\mathcal T}
=
(\mathbf\Phi_{\mathcal T}(\mathbf A)\otimes\mathbf I_{mn})\mathbf x
+
(\mathbf\Psi_{\mathcal T}(\mathbf B)\otimes\mathbf I_{mn})\mathbf n.
\end{equation}

Thus, the colluding workers observe a linear Gaussian mechanism whose
statistical geometry depends explicitly on the selected basis functions. For DP, we use entry-level adjacency: two datasets $\mathbf X\sim\mathbf X'$ are neighboring if they differ in at most one
scalar entry. Since all entries lie in $[-r,r]$, their maximum change is
$\Delta=2r$. The corresponding $\ell_2$-sensitivity of the coalition
mean is

\begin{equation}
\label{Delta_M}
\Delta_{\mathcal M,\mathcal T}
=
2r\max_{j\in[K]}
\|\mathbf\Phi_{\mathcal T}(\mathbf{A})(:,j)\|_2.
\end{equation}

Recall that a randomized mechanism $\mathcal M$ satisfies
$(\epsilon,\delta)$-DP if, for every $\mathbf X\sim\mathbf X'$ and
measurable set $S$,
$\mathbb P(\mathcal M(\mathbf X)\in S)
\leq e^\epsilon\mathbb P(\mathcal M(\mathbf X')\in S)+\delta$. Assuming $\mathbf\Psi_{\mathcal T}(\mathbf{B})$ is nonsingular (it can be shown that assuming non-singularity of B ensures non-singularity of the Gram matrix appearing in the expression below), define

\begin{equation}
\label{eq:G_T}
\mathbf G_{\mathcal T}(\mathbf{A},\mathbf{B})
=
\mathbf\Phi_{\mathcal T}(\mathbf{A})^H
\left(
\mathbf\Psi_{\mathcal T}(\mathbf{B})
\mathbf\Psi_{\mathcal T}(\mathbf{B})^H
\right)^{-1}
\mathbf\Phi_{\mathcal T}(\mathbf{A}),
\end{equation}
and
\[
\Gamma_t(\mathbf{A},\mathbf{B})
=
\max_{\substack{\mathcal T\subset[N]\\|\mathcal T|=t}}
\max_{j\in[K]}
[\mathbf G_{\mathcal T}(\mathbf{A},\mathbf{B})]_{jj}.
\]

\begin{theorem}[Sufficient condition for $(\epsilon,\delta)$-DP]
\label{th:dp}

For a colluding worker nodes set $\mathcal T$, define
$s_{\mathcal T}
=
\frac{4r^2t}{\sigma_n^2}
\max_{j\in[K]}
[\mathbf G_{\mathcal T}(\mathbf A, \mathbf B)]_{jj}$.
Then the generalized framework satisfies $(\epsilon,\delta)$-DP against
any coalition of at most $t$ workers whenever
$\delta\geq\max_{|\mathcal T|=t}g(s_{\mathcal T};\epsilon)$, where

\begin{equation}
g(s;\epsilon)
=
\Phi\!\left(
\frac{s-\epsilon}{\sqrt{2s}}
\right)
-
e^\epsilon
\Phi\!\left(
\frac{-s-\epsilon}{\sqrt{2s}}
\right),
\end{equation}

and $\Phi(\cdot)$ denotes the standard Gaussian CDF.
\end{theorem}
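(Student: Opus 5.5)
The plan is to reduce the coalition's view in \eqref{coalition model} to a Gaussian mechanism with a full-rank covariance, whiten it, and then apply the exact privacy-loss characterization of the Gaussian mechanism (the analytic Gaussian mechanism of Balle--Wang). Fix a coalition $\mathcal T$ with $|\mathcal T|=t$. Because the entries of $\mathbf N_r$ are i.i.d.\ $\mathcal{CN}(0,\sigma_n^2/t)$, we have $\mathbf n\sim\mathcal{CN}(\mathbf 0,\frac{\sigma_n^2}{t}\mathbf I)$. Hence, conditioned on $\mathbf x$,
\[
\mathbf y_{\mathcal T}\sim\mathcal{CN}\bigl((\mathbf\Phi_{\mathcal T}\otimes\mathbf I_{mn})\mathbf x,\ \tfrac{\sigma_n^2}{t}(\mathbf\Psi_{\mathcal T}\mathbf\Psi_{\mathcal T}^H\otimes\mathbf I_{mn})\bigr),
\]
and this covariance is invertible because $\mathbf\Psi_{\mathcal T}(\mathbf B)$ is assumed nonsingular.

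\textbf{Whitening.} Multiply by $\mathbf W=\sqrt{t}/\sigma_n\,\bigl((\mathbf\Psi_{\mathcal T}\mathbf\Psi_{\mathcal T}^H)^{-1/2}\otimes\mathbf I_{mn}\bigr)$. This is an invertible map, so by post-processing in both directions it does not change the DP guarantee. After whitening, the observation is $\tilde{\mathbf y}=\boldsymbol\mu(\mathbf x)+\mathbf w$ with $\mathbf w\sim\mathcal{CN}(\mathbf 0,\mathbf I)$.

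\textbf{Mean shift between neighbours.} Take neighbouring datasets that differ in the $(a,b)$ entry of $\mathbf X_j$ by $d$, with $|d|\le 2r$. The whitened mean shift is $\frac{\sqrt t}{\sigma_n}d\,\bigl((\mathbf\Psi\mathbf\Psi^H)^{-1/2}\mathbf\Phi(:,j)\bigr)\otimes\mathbf e_{(a,b)}$, whose squared norm is
\[
\frac{t d^2}{\sigma_n^2}\,\mathbf\Phi(:,j)^H(\mathbf\Psi\mathbf\Psi^H)^{-1}\mathbf\Phi(:,j)=\frac{t d^2}{\sigma_n^2}[\mathbf G_{\mathcal T}]_{jj}.
\]
Maximizing over $j$ and $|d|\le 2r$ gives the squared whitened sensitivity $\frac{4r^2t}{\sigma_n^2}\max_j[\mathbf G_{\mathcal T}]_{jj}=s_{\mathcal T}$.

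\textbf{Privacy loss.} For two Gaussians with identity covariance and mean difference $\mathbf v$, the privacy-loss random variable reduces to a one-dimensional Gaussian along the direction of $\mathbf v$. For the circular complex case I will check the normalization: $\mathcal{CN}(0,\mathbf I)$ has variance $1/2$ per real dimension, so the real-embedded mean separation has squared norm $2\|\mathbf v\|^2$ relative to unit-variance real noise. The loss is $L\sim\mathcal N(\mu,2\mu)$ with $\mu=\|\mathbf v\|^2=s$. This matches the arguments $(s-\epsilon)/\sqrt{2s}$ of $g$.

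The exact tight $\delta$ is $\mathbb P(L>\epsilon)-e^\epsilon\,\mathbb P(L'<-\epsilon)$, where $L'$ is the loss under the other hypothesis. This gives
\[
\Phi\!\left(\frac{s-\epsilon}{\sqrt{2s}}\right)-e^\epsilon\Phi\!\left(\frac{-s-\epsilon}{\sqrt{2s}}\right)=g(s;\epsilon).
\]
This is Theorem~8 of Balle--Wang after substituting their $\Delta/\sigma$ by the appropriate quantity, and I will verify the constants explicitly.

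\textbf{Monotonicity and the maximum over coalitions.} I then show that $g(s;\epsilon)$ is nondecreasing in $s$: a larger mean separation gives a mechanism that is a post-processing-dominated version, or equivalently a direct derivative check. Since the actual sensitivity of any neighbouring pair is at most $s_{\mathcal T}$, the mechanism seen by $\mathcal T$ is $(\epsilon,g(s_{\mathcal T};\epsilon))$-DP. Coalitions of size less than $t$ observe a sub-vector of some size-$t$ coalition's view, so by post-processing they inherit the guarantee. Taking the maximum over $|\mathcal T|=t$ gives the claim.

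\textbf{Main obstacle.} The hard part is the complex-valued setting. The mean $(\mathbf\Phi\otimes\mathbf I)\mathbf x$ is complex while the data perturbation is real, so the real embedding must be handled carefully so that the factor of $2$ from circular noise produces exactly $s_{\mathcal T}$ rather than $2s_{\mathcal T}$ or $s_{\mathcal T}/2$.
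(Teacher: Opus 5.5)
Your proposal is correct and takes essentially the same route as the paper. The coalition's view is a linear Gaussian mechanism with covariance $\frac{\sigma_n^2}{t}(\mathbf\Psi_{\mathcal T}\mathbf\Psi_{\mathcal T}^H)\otimes\mathbf I_{mn}$, the privacy loss is $\mathcal N(\pm s,2s)$, the single-entry adjacency and Kronecker structure give $s=\frac{t|a|^2}{\sigma_n^2}[\mathbf G_{\mathcal T}]_{jj}$, and monotonicity of $g$ plus post-processing finish the argument.

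The differences are presentational. You whiten and take the tight $\delta$ from Balle--Wang's analytic Gaussian mechanism, and your real-embedding factor of $2$ is handled correctly. The paper instead works directly with the complex privacy loss $2\Re\{\mathbf d^H\mathbf\Sigma_{\mathcal T}^{-1}(\mathbf y-\boldsymbol\mu')\}-\mathbf d^H\mathbf\Sigma_{\mathcal T}^{-1}\mathbf d$. It proves by hand that $\{L>\epsilon\}$ maximizes $P(\mathcal S)-e^\epsilon Q(\mathcal S)$, and checks monotonicity of $g$ by differentiation.
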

\begin{proof}
    See Appendix~ \eqref{proof:theorem2}
\end{proof}

Since $g(s;\epsilon)$ is monotonically increasing in $s$, we immediately
obtain the following worst-case coalition guarantee.

\begin{corollary}
\label{Corollary 2}
The generalized framework satisfies $(\epsilon,\delta)$-DP whenever
$\delta\geq
g\!\left(
\frac{4r^2t}{\sigma_n^2}\Gamma_t(\mathbf A,\mathbf B);
\epsilon
\right)$.
\end{corollary}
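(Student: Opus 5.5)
The plan is to reduce Corollary~\ref{Corollary 2} to Theorem~\ref{th:dp} using two facts: the definition of $\Gamma_t(\mathbf A,\mathbf B)$ as the worst-case coalition quantity, and the claimed monotonicity of $g(\cdot;\epsilon)$ in $s$.

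First, fix any coalition $\mathcal T$ with $|\mathcal T|=t$. By definition,
\[
s_{\mathcal T}=\frac{4r^2t}{\sigma_n^2}\max_{j\in[K]}[\mathbf G_{\mathcal T}(\mathbf A,\mathbf B)]_{jj}\le \frac{4r^2t}{\sigma_n^2}\Gamma_t(\mathbf A,\mathbf B)=:s^\star .
\]
The diagonal entries of $\mathbf G_{\mathcal T}$ are nonnegative, because $(\mathbf\Psi_{\mathcal T}\mathbf\Psi_{\mathcal T}^H)^{-1}$ is positive definite. Hence $0\le s_{\mathcal T}\le s^\star$. The degenerate case $s_{\mathcal T}=0$ is handled by the convention $g(0;\epsilon)=0$, which is the limit as $s\to0^+$ and corresponds to a coalition that learns nothing about the data.

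Second, I will verify the monotonicity of $g$. Set $a=(s-\epsilon)/\sqrt{2s}$ and $b=(-s-\epsilon)/\sqrt{2s}$, and differentiate in $s$. The key identity is $e^\epsilon\varphi(b)=\varphi(a)$, which holds because $b^2-a^2=2\epsilon$. With it the derivative collapses to
\[
\frac{\partial g}{\partial s}=\varphi(a)\left(\frac{\partial a}{\partial s}-\frac{\partial b}{\partial s}\right)=\varphi(a)\,\frac{1}{\sqrt{2s}}>0 .
\]
Here $\varphi$ is the standard Gaussian density, and the last step uses $a-b=2s/\sqrt{2s}=\sqrt{2s}$. This is the only computational step and the main thing to check carefully. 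The cancellation of the $\epsilon$-dependent terms comes from the Gaussian-mechanism privacy-loss structure, the same structure that gives the $(\epsilon,\delta)$ curve of a Gaussian shift. If $g$ is not differentiable at some boundary point, continuity on $(0,\infty)$ suffices.

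Third, combine the two facts. Monotonicity gives $g(s_{\mathcal T};\epsilon)\le g(s^\star;\epsilon)$ for every $\mathcal T$. Therefore
\[
\max_{|\mathcal T|=t} g(s_{\mathcal T};\epsilon)\le g(s^\star;\epsilon).
\]
Since $\Gamma_t$ is a maximum over coalitions, this is in fact an equality. So $\delta\ge g(s^\star;\epsilon)$ implies the hypothesis of Theorem~\ref{th:dp}, which yields $(\epsilon,\delta)$-DP against any coalition of exactly $t$ workers.

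Finally, coalitions of fewer than $t$ workers are covered by post-processing. A smaller coalition's view is a marginal, i.e.\ a sub-vector, of some size-$t$ coalition's view, and DP is preserved under post-processing. This gives the statement for coalitions of at most $t$ workers, which concludes the proof.
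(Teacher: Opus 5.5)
Your proposal is correct and follows the paper's route: bound every $s_{\mathcal T}$ by $\frac{4r^2t}{\sigma_n^2}\Gamma_t(\mathbf A,\mathbf B)$, then use that $g(\cdot;\epsilon)$ is increasing in $s$ to reduce the statement to Theorem~\ref{th:dp}. Your derivative $\varphi(a)/\sqrt{2s}$ is the correct one: the paper's appendix gives $b'(s)$ with the wrong sign and obtains $\varphi(a)\epsilon/(\sqrt{2}s^{3/2})$, although its positivity conclusion still holds. Rename your $s^\star$, since the paper reserves that symbol for the root of $g(s^\star;\epsilon)=\delta$.
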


We additionally characterize privacy using MIS. Define the worst-case
coalition leakage as
$\eta_C=
\max_{\substack{\mathcal T\subset[N]\\|\mathcal T|=t}}
I(\mathbf X;\mathbf y_{\mathcal T})$.

\begin{proposition}
\label{prop:MIS}
For fixed $N$, $K$, $t$, $r$, and $\sigma_n$, and for
$r=O(\sigma_n)$

\begin{equation}
\label{eq:MIS_second_bound}
\eta_C
\leq
\frac{1}{\ln 2}
\max_{\substack{\mathcal T\subset[N]\\|\mathcal T|=t}}
\mathrm{tr}\!\left(
\mathbf G_{\mathcal T}(\mathbf A,\mathbf B)
\right)
\frac{r^2t}{\sigma_n^2}
+
o\!\left(\frac{r^2}{\sigma_n^2}\right).
\end{equation}

\end{proposition}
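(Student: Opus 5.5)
We bound the mutual information for a fixed coalition $\mathcal T$ and then take the maximum over coalitions. Since $\mathbf\Psi_{\mathcal T}(\mathbf B)$ is nonsingular, we first whiten the noise. Conditioned on $\mathbf x$, the observation $\mathbf y_{\mathcal T}$ in \eqref{coalition model} is circularly symmetric complex Gaussian with mean $(\mathbf\Phi_{\mathcal T}\otimes\mathbf I_{mn})\mathbf x$ and covariance $\frac{\sigma_n^2}{t}(\mathbf\Psi_{\mathcal T}\mathbf\Psi_{\mathcal T}^H\otimes\mathbf I_{mn})$. Applying the invertible map $\mathbf W=(\mathbf\Psi_{\mathcal T}\mathbf\Psi_{\mathcal T}^H)^{-1/2}\otimes\mathbf I_{mn}$ preserves mutual information. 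It produces the standard channel $\tilde{\mathbf y}=\mathbf H\mathbf x+\tilde{\mathbf n}$ with $\tilde{\mathbf n}\sim\mathcal{CN}(0,\frac{\sigma_n^2}{t}\mathbf I)$ and $\mathbf H^H\mathbf H=\mathbf G_{\mathcal T}(\mathbf A,\mathbf B)\otimes\mathbf I_{mn}$, where $\mathbf G_{\mathcal T}$ is as in \eqref{eq:G_T}.

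We then use the Gaussian maximum-entropy bound. Because the noise is independent of $\mathbf x$,
\[
I(\mathbf X;\mathbf y_{\mathcal T})=h(\tilde{\mathbf y})-h(\tilde{\mathbf n})\le \log_2\det\!\Big(\mathbf I+\tfrac{t}{\sigma_n^2}\mathbf H\mathbf K_{\mathbf x}\mathbf H^H\Big),
\]
where $\mathbf K_{\mathbf x}$ is the covariance of $\mathbf x$. The data entries are independent and lie in $[-r,r]$, so $\mathbf K_{\mathbf x}$ is diagonal with entries at most $r^2$, giving $\mathbf K_{\mathbf x}\preceq r^2\mathbf I$. By monotonicity of $\log\det$ and Sylvester's identity, the bound becomes $\log_2\det(\mathbf I+\frac{r^2t}{\sigma_n^2}\mathbf G_{\mathcal T}\otimes\mathbf I_{mn})$.

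Next we linearize this expression. Let $\lambda_k$ denote the eigenvalues of the Hermitian positive semidefinite matrix $\mathbf G_{\mathcal T}$. Then
\[
\log_2\det(\cdot)=\frac{mn}{\ln 2}\sum_k\ln\!\Big(1+\tfrac{r^2t}{\sigma_n^2}\lambda_k\Big).
\]
Using $\ln(1+x)=x+O(x^2)$ with $x=O(r^2/\sigma_n^2)$, we obtain
\[
\frac{mn}{\ln2}\,\frac{r^2t}{\sigma_n^2}\,\mathrm{tr}(\mathbf G_{\mathcal T})+O\!\Big(\tfrac{r^4}{\sigma_n^4}\Big).
\]
For fixed $N,K,t$ and fixed $\mathbf A,\mathbf B$, there are finitely many coalitions and the $\lambda_k$ are bounded. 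The remainder is therefore uniformly $o(r^2/\sigma_n^2)$ in the small-SNR regime, and taking the maximum over $\mathcal T$ gives \eqref{eq:MIS_second_bound}. Alternatively, the exact inequality $\ln(1+x)\le x$ yields the bound without the remainder term.

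The main obstacle is bookkeeping rather than analysis. The statement as written carries no factor $mn$, so the trace must be read either per data entry (leakage per scalar coordinate) or with $\mathbf G_{\mathcal T}$ understood through its Kronecker extension. Another reading is that the authors absorb $mn$ into the fixed parameters. I would state this normalization explicitly at the start.

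A second point needs care: the data are real while the noise is complex circular. To handle this, I would compute $I(\mathbf x;\tilde{\mathbf y})$ by splitting $\tilde{\mathbf y}$ into real and imaginary parts. I would then check that the same trace expression bounds the real-input, complex-noise case, possibly up to the constant that the $\frac{1}{\ln 2}$ and $\frac{r^2t}{\sigma_n^2}$ scaling already accommodate.
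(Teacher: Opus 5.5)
The paper gives no proof of this proposition; nothing in the appendices addresses it. So there is no route to compare yours against, and I assess it on its own terms. Your argument is sound and is very likely the intended one: whiten by $\mathbf\Psi_{\mathcal T}$, apply the Gaussian maximum-entropy bound, use $\mathbf K_{\mathbf x}\preceq r^2\mathbf I$ with Sylvester's identity, then $\ln(1+x)\le x$. What it establishes, for all $r,\sigma_n$ and with no remainder term, is
\[
\eta_C\le\frac{mn}{\ln 2}\,\frac{r^2t}{\sigma_n^2}\max_{\substack{\mathcal T\subset[N]\\|\mathcal T|=t}}\mathrm{tr}\bigl(\mathbf G_{\mathcal T}(\mathbf A,\mathbf B)\bigr).
\]

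You are right that the factor $mn$ is missing from the statement, but this is not a matter of interpretation. The option of absorbing $mn$ into the fixed parameters is not available, because a multiplicative factor on the leading term cannot hide in $o(r^2/\sigma_n^2)$. In fact the printed bound is false for $mn\ge 2$. Data and noise entries are independent across entry positions, so the coalition's view splits into $mn$ independent channels $\mathbf y_{\mathcal T,pq}=\mathbf\Phi_{\mathcal T}\mathbf x_{pq}+\mathbf\Psi_{\mathcal T}\mathbf n_{pq}$. Here $\mathbf x_{pq}$, $\mathbf n_{pq}$ and $\mathbf y_{\mathcal T,pq}$ collect the $(p,q)$ entries of the $K$ data blocks, the $t$ noise blocks and the $t$ shares. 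Hence $I(\mathbf X;\mathbf y_{\mathcal T})=\sum_{p,q}I(\mathbf x_{pq};\mathbf y_{\mathcal T,pq})$ exactly.

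Now take entries uniform on $\{\pm r\}$. The first-order low-SNR term of mutual information depends on the input only through its covariance (Guo--Shamai--Verd\'u). So each summand equals $\frac{1}{\ln 2}\frac{r^2t}{\sigma_n^2}\mathrm{tr}(\mathbf G_{\mathcal T})+o(r^2/\sigma_n^2)$ as $r/\sigma_n\to 0$. Then $\eta_C$ exceeds the printed right-hand side by $(mn-1)$ times its leading term whenever $\max_{\mathcal T}\mathrm{tr}\,\mathbf G_{\mathcal T}>0$. You should therefore present the result as a corrected statement, in one of two equivalent forms. Either include the factor $mn$ (equivalently, use $\mathrm{tr}(\mathbf G_{\mathcal T}\otimes\mathbf I_{mn})$), or state it as a bound on the per-position leakage $\max_{\mathcal T}I(\mathbf x_{pq};\mathbf y_{\mathcal T,pq})$, which is what the printed inequality correctly describes.

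Two smaller points.

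First, the expansion $\ln(1+x)=x+O(x^2)$ gives an $o(r^2/\sigma_n^2)$ remainder only when $r/\sigma_n\to 0$. The hypothesis $r=O(\sigma_n)$ merely keeps $x$ bounded. Make the exact inequality $\ln(1+x)\le x$ your main step; it makes both the remainder and that hypothesis superfluous.

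Second, the real-data/complex-noise check you defer needs no work. The bound $h(\mathbf z)\le\log\det(\pi e\,\mathbf K_{\mathbf z})$ holds for every complex random vector with covariance $\mathbf K_{\mathbf z}$, proper or not. Your log-det bound therefore stands as written, with no change of constant. If you prefer to split into real and imaginary parts, you get the same result:
\begin{itemize}
\item $\mathbf H_R=[\Re\mathbf H;\Im\mathbf H]$ satisfies $\mathbf H_R^T\mathbf H_R=\Re(\mathbf G_{\mathcal T})\otimes\mathbf I_{mn}$;
\item the per-dimension noise variance is $\sigma_n^2/(2t)$;
\item this gives $\frac{mn}{2}\log_2\det\bigl(\mathbf I+\frac{2r^2t}{\sigma_n^2}\Re\mathbf G_{\mathcal T}\bigr)$, whose trace bound is identical because $\mathrm{tr}\,\Re\mathbf G_{\mathcal T}=\mathrm{tr}\,\mathbf G_{\mathcal T}$.
\end{itemize}
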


\begin{remark}
\label{rem:basis_privacy}
Both privacy metrics depend on the encoding basis through
$\mathbf G_{\mathcal T}(\mathbf A,\mathbf B)$: DP is governed by its worst-case diagonal
entry through $\Gamma_t(\mathbf A,\mathbf B)$, whereas the MIS bound depends on its
trace. In both cases, increasing $\sigma_n^2$ strengthens privacy.
\end{remark}
%\vspace{-0.5cm}
\section{Accuracy-Privacy Aware Basis Design}
\label{sec:basis selection}
We now design the generalized basis functions using the accuracy and
privacy characterizations derived in
Section~\ref{sec:accuracy and privacy}. Recall that the basis functions
are completely determined by the coefficient matrices $\mathbf A$ and $\mathbf B$.
From Remark~\ref{rem:M_accuracy}, reducing
$M(\mathbf A,\mathbf B,\sigma_n)$ improves the basis-dependent component of the
reconstruction-error bound. However, minimizing $M$ alone may yield
$\psi_r(z)=0$, thereby eliminating the noise component and
compromising privacy. On the other hand,
Corollary~\ref{Corollary 2} shows that the DP guarantee depends on the
basis through $\Gamma_t(\mathbf A,\mathbf B)$. We therefore jointly design $\mathbf A$ and $\mathbf B$
to reduce the reconstruction-error bound while satisfying a prescribed privacy requirement. In the following, we proceed with DP as the primary privacy metric since it provides a worst-case guarantee over neighboring datasets, whereas MIS quantifies the average information leakage and depends on the underlying data distribution \cite{DP_VS_MIS}. Nevertheless, the proposed basis-design framework is not tied to DP, and an analogous optimization problem can be formulated using MIS as the privacy metric using Proposition~\ref{prop:MIS}.
Therefore, for a fixed noise level $\sigma_n$, let $s^\star>0$ satisfy
$g(s^\star;\epsilon)=\delta$. The corresponding basis-design problem can be formulated as

\begin{mdframed}
\begin{problem}
\label{optimisation problem}
For the given system parameters $N$, $K$, $t$, $m$, $n$, $r$, $\sigma_n$, $\theta$, $\epsilon$ and $\delta$, solve:

\begin{equation}
\begin{aligned}
(\mathbf A^\star,\mathbf B^\star)
\in\arg\min_{\mathbf A,\mathbf B}\quad
&M(\mathbf A,\mathbf B,\sigma_n)^D\\
\mathrm{s.t.}\quad
&\Gamma_t(\mathbf A,\mathbf B)
\leq
\frac{\sigma_n^2s^\star}{4r^2t}, \quad \sigma_{\min}(\mathbf B)\geq\eta,
\end{aligned}
\label{eq:fixed_noise_problem}
\end{equation}

where $\eta>0$ ensures that the noise-basis coefficient matrix remains
sufficiently separated from singularity.
\end{problem}
\end{mdframed}
The constraint $\sigma_{\min}(\mathbf B)\geq\eta$ also guarantees the nonsingularity of the coalition matrices $\Psi_{\mathcal T}(\mathbf B)$. Rationale behind including this constraint is explained in Lemma~\eqref{lemma:coalition_invertibility}.
\begin{lemma}
\label{lemma:coalition_invertibility}
Suppose that the worker evaluation points
$\{\alpha_i\}_{i=1}^{N}$ are pairwise distinct and
satisfy $q(\alpha_i)\neq 0$ for all $i\in[N]$.
Then, the coalition noise evaluation matrix
$\mathbf{\Psi}_{\mathcal T}(\mathbf B)$ is nonsingular
for every $\mathcal T\subseteq[N]$ with
$|\mathcal T|=t$ if and only if $\mathbf B$ is
nonsingular. Furthermore, imposing
$\sigma_{\min}(\mathbf B)\geq\eta>0$ guarantees $\sigma_{\min}
\bigl(\mathbf{\Psi}_{\mathcal T}(\mathbf B)\bigr)
\geq
\eta\,\sigma_{\min}(\mathbf H_{\mathcal T})
>0,$
where $\mathbf H_{\mathcal T}$ is the
coalition-dependent matrix defined in
Appendix~\ref{app:coalition_matrix_structure}.
\end{lemma}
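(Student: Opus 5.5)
The plan is to factor the coalition noise evaluation matrix into a coalition-dependent part, which does not depend on $\mathbf B$, times $\mathbf B^{T}$. By \eqref{psi basis} and the monomial expansion $\xi_r(z)=\sum_{n=0}^{t-1}b_{r,n}z^n$, the $(a,r)$ entry of $\mathbf\Psi_{\mathcal T}(\mathbf B)$ is
\begin{equation*}
\psi_r(\alpha_{i_a})=q(\alpha_{i_a})\sum_{n=0}^{t-1}\alpha_{i_a}^{n}\,b_{r,n}.
\end{equation*}
Let $\mathbf D_{\mathcal T}=\mathrm{diag}\bigl(q(\alpha_{i_1}),\ldots,q(\alpha_{i_t})\bigr)$, and let $\mathbf V_{\mathcal T}\in\mathbb C^{t\times t}$ be the square Vandermonde matrix with $[\mathbf V_{\mathcal T}]_{a,n+1}=\alpha_{i_a}^{n}$ for $n=0,\ldots,t-1$. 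Then
\begin{equation*}
\mathbf\Psi_{\mathcal T}(\mathbf B)=\mathbf D_{\mathcal T}\mathbf V_{\mathcal T}\mathbf B^{T}=\mathbf H_{\mathcal T}\mathbf B^{T},\qquad \mathbf H_{\mathcal T}:=\mathbf D_{\mathcal T}\mathbf V_{\mathcal T}.
\end{equation*}
I would take this $\mathbf H_{\mathcal T}$ as the coalition-dependent matrix referred to in the statement.

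Next I show that $\mathbf H_{\mathcal T}$ is nonsingular for every coalition $\mathcal T$ of size $t$. Its determinant is
\begin{equation*}
\det\mathbf H_{\mathcal T}=\prod_{a}q(\alpha_{i_a})\prod_{a<b}(\alpha_{i_b}-\alpha_{i_a}).
\end{equation*}
The first product is nonzero by the hypothesis $q(\alpha_i)\neq0$. The second is nonzero because the evaluation points are pairwise distinct. The equivalence then follows from $\det\mathbf\Psi_{\mathcal T}(\mathbf B)=\det\mathbf H_{\mathcal T}\det\mathbf B$.
\begin{itemize}
\item If $\mathbf B$ is nonsingular, then $\mathbf\Psi_{\mathcal T}(\mathbf B)$ is nonsingular for every $\mathcal T$.
\item Conversely, if $\mathbf\Psi_{\mathcal T}(\mathbf B)$ is nonsingular for some (in particular every) $\mathcal T$, then $\det\mathbf B\neq0$.
\end{itemize}
So the two conditions are equivalent, and in fact nonsingularity for a single coalition already suffices.

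For the quantitative bound, I use the standard inequality $\sigma_{\min}(\mathbf P\mathbf Q)\ge\sigma_{\min}(\mathbf P)\sigma_{\min}(\mathbf Q)$ for square matrices. It follows from
\begin{equation*}
\|\mathbf P\mathbf Q\mathbf v\|_2\ge\sigma_{\min}(\mathbf P)\|\mathbf Q\mathbf v\|_2\ge\sigma_{\min}(\mathbf P)\sigma_{\min}(\mathbf Q)\|\mathbf v\|_2 .
\end{equation*}
Transposition preserves singular values, so $\sigma_{\min}(\mathbf B^{T})=\sigma_{\min}(\mathbf B)\ge\eta$. This gives
\begin{equation*}
\sigma_{\min}\bigl(\mathbf\Psi_{\mathcal T}(\mathbf B)\bigr)\ge\eta\,\sigma_{\min}(\mathbf H_{\mathcal T})>0,
\end{equation*}
where strict positivity comes from the nonsingularity of $\mathbf H_{\mathcal T}$ established above.

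There is no real obstacle here. The only point needing care is bookkeeping: one must check that the row index of $\mathbf\Psi_{\mathcal T}$ runs over workers and the column index over $r$, so that the coefficient matrix enters as $\mathbf B^{T}$ and not $\mathbf B$. One must also confirm that the roots-of-unity evaluation points $\alpha_i$ avoid the zeros $\beta_j$ of $q$. That holds, for example, whenever $\beta\neq1$, and in any case it is exactly the hypothesis of the lemma.
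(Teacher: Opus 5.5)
Your proposal is correct and follows essentially the same route as the paper: the same factorization $\mathbf\Psi_{\mathcal T}(\mathbf B)=\mathbf D_{\mathcal T}\mathbf V_{\mathcal T}\mathbf B^{T}=\mathbf H_{\mathcal T}\mathbf B^{T}$, nonsingularity of $\mathbf H_{\mathcal T}$ from distinct evaluation points and $q(\alpha_i)\neq0$, and the product inequality $\sigma_{\min}(\mathbf H_{\mathcal T}\mathbf B^{T})\ge\sigma_{\min}(\mathbf H_{\mathcal T})\sigma_{\min}(\mathbf B)$. The only difference is cosmetic: you use $\det\mathbf\Psi_{\mathcal T}(\mathbf B)=\det\mathbf H_{\mathcal T}\det\mathbf B$ for the equivalence, whereas the paper uses $\operatorname{rank}(\mathbf H_{\mathcal T}\mathbf B^{T})=\operatorname{rank}(\mathbf B)$.
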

\begin{proof}
See Appendix~\ref{app:coalition_matrix_structure}.
\end{proof}
Fixing $\sigma_n^2$ a priori, however, unnecessarily restricts the design. For any candidate $(\mathbf A,\mathbf B)$, the minimum noise variance required to satisfy the target DP guarantee can instead be determined directly from $\Gamma_t(\mathbf A,\mathbf B)$.

\begin{theorem}
\label{th:min_noise_var}
Let $s^\star\!>\!\!0$ satisfy $g(s^\star;\epsilon)\!=\!\delta$. For any feasible $\mathbf A$ and $\mathbf B$, the minimum noise variance required for
$(\epsilon,\delta)$-DP is
\begin{equation}
\sigma_{\mathrm{req}}^2(\mathbf A,\mathbf B)
=
\frac{4r^2t}{s^\star}\Gamma_t(\mathbf A,\mathbf B).
\label{eq:required_noise_variance}
\end{equation}
\end{theorem}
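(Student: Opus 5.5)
The plan is to combine Corollary~\ref{Corollary 2} with the monotonicity of $g(\cdot;\epsilon)$ in $s$. First, I would check that $\Gamma_t(\mathbf A,\mathbf B)$ does not depend on $\sigma_n$. By definition, $\mathbf G_{\mathcal T}(\mathbf A,\mathbf B)$ in \eqref{eq:G_T} is built only from the basis evaluation matrices $\mathbf\Phi_{\mathcal T}(\mathbf A)$ and $\mathbf\Psi_{\mathcal T}(\mathbf B)$; the noise variance enters the mechanism only through the covariance $\frac{\sigma_n^2}{t}\mathbf I$ of $\mathbf n$. Hence the worst-case privacy loss parameter $s(\sigma_n^2)=\frac{4r^2t}{\sigma_n^2}\Gamma_t(\mathbf A,\mathbf B)$ is a strictly decreasing, continuous bijection from $(0,\infty)$ onto $(0,\infty)$ whenever $\Gamma_t>0$. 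Positivity of $\Gamma_t$ holds for feasible $(\mathbf A,\mathbf B)$: $\mathbf\Psi_{\mathcal T}(\mathbf B)$ is nonsingular by Lemma~\ref{lemma:coalition_invertibility}, so $(\mathbf\Psi\mathbf\Psi^H)^{-1}\succ 0$, and $\mathbf\Phi_{\mathcal T}(\mathbf A)$ has a nonzero column.

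Second, I would establish (or invoke, as stated before Corollary~\ref{Corollary 2}) that $g(s;\epsilon)$ is continuous and strictly increasing in $s$, with $g\to 0$ as $s\to0^+$. This can be checked by differentiating. Using $\Phi'=\varphi$ and the identity $e^{\epsilon}\varphi\bigl(\tfrac{-s-\epsilon}{\sqrt{2s}}\bigr)=\varphi\bigl(\tfrac{s-\epsilon}{\sqrt{2s}}\bigr)$, the derivative collapses to $\varphi\bigl(\tfrac{s-\epsilon}{\sqrt{2s}}\bigr)/\sqrt{2s}>0$. This is the standard Gaussian-mechanism privacy profile, and it guarantees that $s^\star$ with $g(s^\star;\epsilon)=\delta$ is unique and that $g(s;\epsilon)\le\delta \iff s\le s^\star$.

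Third, I would prove the two directions. \emph{Sufficiency:} if $\sigma_n^2\ge \sigma^2_{\mathrm{req}}$, then $s(\sigma_n^2)\le s^\star$, so $g(s(\sigma_n^2);\epsilon)\le\delta$, and Corollary~\ref{Corollary 2} yields $(\epsilon,\delta)$-DP. \emph{Minimality:} if $\sigma_n^2<\sigma^2_{\mathrm{req}}$, then $s>s^\star$ and hence $g>\delta$. Solving $\frac{4r^2t}{\sigma_n^2}\Gamma_t=s^\star$ gives \eqref{eq:required_noise_variance}.

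The main obstacle is the minimality direction. Corollary~\ref{Corollary 2} is only a sufficient condition, so showing that no smaller variance works requires tightness of $g$ for the actual mechanism. I would first argue this within the paper's criterion, interpreting ``minimum required'' as minimal with respect to the certified condition of Theorem~\ref{th:dp}, where it follows from strict monotonicity. If genuine tightness is needed, I would appeal to the exactness of the analytic Gaussian mechanism. Whitening the coalition observation \eqref{coalition model} by $(\mathbf\Psi_{\mathcal T}\otimes\mathbf I)^{-1}$ gives a Gaussian mechanism whose Mahalanobis sensitivity for a change $2r$ in entry $j$ is exactly $\frac{4r^2t}{\sigma_n^2}[\mathbf G_{\mathcal T}]_{jj}=s_{\mathcal T}$. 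Its optimal $\delta(\epsilon)$ is exactly $g(s_{\mathcal T};\epsilon)$, attained at the maximizing $(\mathcal T,j)$ with a pair of neighbors at $\pm r$. That would make the bound of Theorem~\ref{th:dp} tight, giving equality rather than mere sufficiency.
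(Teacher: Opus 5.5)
Your proposal is correct, and its sufficiency direction is exactly the paper's proof: invoke Corollary~\ref{Corollary 2}, use strict monotonicity of $g(\cdot;\epsilon)$ to turn $g(s_{\mathrm{prot}};\epsilon)\le\delta$ into $s_{\mathrm{prot}}\le s^\star$, and rearrange.

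\textbf{Minimality.} This is where you go beyond the paper. The paper's proof stops at ``whenever $\sigma_n^2\ge\frac{4r^2t}{s^\star}\Gamma_t$ the framework is DP,'' so ``minimum'' there is only justified relative to the certified condition. Your tightness argument is sound and needs nothing new. The paper's own proof of Theorem~\ref{th:dp} already shows that, for each neighboring pair, the optimal $\delta$ is exactly $g(s(\mathbf x,\mathbf x');\epsilon)$, via the maximizing set $\{L>\epsilon\}$. The worst case $s_{\mathcal T}$ is attained by a pair at $\pm r$ in one entry of the maximizing block $j$. Taking the maximizing coalition then shows that $\sigma_n^2<\sigma^2_{\mathrm{req}}$ genuinely violates $(\epsilon,\delta)$-DP, rather than merely failing the sufficient condition.

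\textbf{Derivative of $g$.} Your expression $\varphi(a(s))/\sqrt{2s}$ is the correct one. The paper writes $b'(s)=\frac{s-\epsilon}{2\sqrt{2}s^{3/2}}$, but in fact $b'(s)=\frac{\epsilon-s}{2\sqrt{2}s^{3/2}}$. Hence $a'(s)-b'(s)=1/\sqrt{2s}$ rather than $\epsilon/(\sqrt{2}s^{3/2})$. Positivity, which is all either argument uses, is unaffected.

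\textbf{Positivity of $\Gamma_t$.} Your claim that $\mathbf\Phi_{\mathcal T}(\mathbf A)$ has a nonzero column is not true for every $\mathcal T$. For a single fixed coalition one can solve $\eta_j(\alpha_i)=-\widetilde{\ell}_j(\alpha_i)/q(\alpha_i)$ for $i\in\mathcal T$, which makes $\mathbf\Phi_{\mathcal T}(\mathbf A)=\mathbf 0$. However, $\Gamma_t>0$ only needs one coalition with a nonzero column. That holds because $\phi_1$ is a nonzero polynomial (with $\phi_1(\beta_1)=1$) of degree at most $K+t-1$, while $N\ge\widetilde{D}+1\ge K+t$, so $\phi_1$ cannot vanish at every worker point.
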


Since $M(\mathbf A,\mathbf B,\sigma_n)$ is non-decreasing in $\sigma_n$ for fixed $(\mathbf A,\mathbf B)$, using more noise than required cannot improve the reconstruction-error bound. Hence, it suffices to choose

\begin{equation}
\sigma_n
=
\sigma_{\mathrm{req}}(\mathbf A,\mathbf B)
=
2r
\sqrt{
\frac{t\Gamma_t(\mathbf A,\mathbf B)}{s^\star}
}.
\label{eq:minimum_required_noise}
\end{equation}

Substituting \eqref{eq:minimum_required_noise} into \eqref{eq:M_a}
gives the privacy-calibrated accuracy criterion

\begin{equation}
\label{eq:privacy_calibrated_objective}
\begin{aligned}
\rho(\mathbf A,\mathbf B)
\triangleq \max_{i\in[N]}
\Bigg(
&r\sum_{j=1}^{K}
|\phi_j(\alpha_i;\mathbf A)|\\
&+2r\theta
\sqrt{
\frac{\Gamma_t(\mathbf A,\mathbf B)}{s^\star}
}\\
&\qquad\times
\sum_{r=1}^{t}
|\psi_r(\alpha_i;\mathbf B)|
\Bigg).
\end{aligned}
\end{equation}

\noindent where, by construction,
$\rho(\mathbf A,\mathbf B)=M(\mathbf A,\mathbf B,\sigma_{\mathrm{req}}(\mathbf A, \mathbf B))$.
Thus, minimizing $\rho(\mathbf A,\mathbf B)$ minimizes the basis-dependent component of
the DP compatible reconstruction error bound.
\begin{lemma}
\label{lemma:scale_invariance}
Let $\rho(\mathbf A,\mathbf B)$ denote the
privacy-calibrated objective defined in
\eqref{eq:privacy_calibrated_objective}.
For every admissible pair $(\mathbf A,\mathbf B)$
and every $c\in\mathbb C\setminus\{0\}$, $\rho(\mathbf A,c\mathbf B)
=
\rho(\mathbf A,\mathbf B).$ Therefore, restricting the noise basis coefficient matrix to satisfy $\|\mathbf B\|_F=1$ does not change the attainable objective values of the reduced privacy calibrated basis design problem, provided that only nonsingularity of
$\mathbf B$ is required.
\end{lemma}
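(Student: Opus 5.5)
The plan is to track how each ingredient of $\rho(\mathbf A,\mathbf B)$ in \eqref{eq:privacy_calibrated_objective} transforms when $\mathbf B$ is replaced by $c\mathbf B$, with $\mathbf A$ held fixed. First, the data basis functions $\phi_j(z)=\widetilde\ell_j(z)+q(z)\eta_j(z)$ depend only on $\mathbf A$. Hence the term $r\sum_j|\phi_j(\alpha_i;\mathbf A)|$ and the matrix $\mathbf\Phi_{\mathcal T}(\mathbf A)$ are unchanged. Second, since $\xi_r(z)=\sum_n b_{r,n}z^n$ is linear in the coefficients, scaling $\mathbf B$ by $c$ gives $\psi_r(z;c\mathbf B)=c\,\psi_r(z;\mathbf B)$. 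Consequently $\sum_r|\psi_r(\alpha_i;c\mathbf B)|=|c|\sum_r|\psi_r(\alpha_i;\mathbf B)|$ and $\mathbf\Psi_{\mathcal T}(c\mathbf B)=c\,\mathbf\Psi_{\mathcal T}(\mathbf B)$.

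Next I substitute into \eqref{eq:G_T}. Since $\mathbf\Psi_{\mathcal T}(c\mathbf B)\mathbf\Psi_{\mathcal T}(c\mathbf B)^H=|c|^2\mathbf\Psi_{\mathcal T}(\mathbf B)\mathbf\Psi_{\mathcal T}(\mathbf B)^H$, we get $\mathbf G_{\mathcal T}(\mathbf A,c\mathbf B)=|c|^{-2}\mathbf G_{\mathcal T}(\mathbf A,\mathbf B)$ for every coalition $\mathcal T$. The inverse exists because $c\neq 0$ and $c\mathbf B$ is nonsingular iff $\mathbf B$ is, so Lemma~\ref{lemma:coalition_invertibility} applies to both. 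Taking the maximum of the diagonal entries over $j$ and over $\mathcal T$ gives $\Gamma_t(\mathbf A,c\mathbf B)=|c|^{-2}\Gamma_t(\mathbf A,\mathbf B)$. Therefore $\sqrt{\Gamma_t(\mathbf A,c\mathbf B)}=|c|^{-1}\sqrt{\Gamma_t(\mathbf A,\mathbf B)}$.

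In the second summand of $\rho$, the factor $|c|^{-1}$ from $\sqrt{\Gamma_t}$ cancels the factor $|c|$ from $\sum_r|\psi_r|$. So for each $i$ the bracketed expression is unchanged, and taking $\max_{i\in[N]}$ yields $\rho(\mathbf A,c\mathbf B)=\rho(\mathbf A,\mathbf B)$. For the consequence, take any admissible $(\mathbf A,\mathbf B)$ with $\mathbf B$ nonsingular and set $c=1/\|\mathbf B\|_F$, which is well defined since $\mathbf B\neq 0$. The pair $(\mathbf A,c\mathbf B)$ then has unit Frobenius norm, is still nonsingular, and attains the same objective value. Conversely, every normalized pair is already admissible. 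Hence the two problems have the same set of attainable values.

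There is no real obstacle here. The only points needing care are the following. The inverse Gram matrix scales with $|c|^{-2}$ rather than $c^{-2}$, because of the Hermitian transpose. And the statement is restricted to nonsingularity alone because a constraint such as $\sigma_{\min}(\mathbf B)\ge\eta$ is not scale invariant. Normalization could violate it, which is why the lemma includes the proviso.
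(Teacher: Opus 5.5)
Your proposal is correct and follows essentially the same route as the paper's proof. Both show $\mathbf\Psi_{\mathcal T}(c\mathbf B)=c\,\mathbf\Psi_{\mathcal T}(\mathbf B)$, hence $\Gamma_t(\mathbf A,c\mathbf B)=|c|^{-2}\Gamma_t(\mathbf A,\mathbf B)$ while $\sum_r|\psi_r(\alpha_i;c\mathbf B)|=|c|\sum_r|\psi_r(\alpha_i;\mathbf B)|$, so the factors cancel worker by worker, and then both normalize with $c=1/\|\mathbf B\|_F$. Your extras (the trivial converse inclusion, and the remark that $\sigma_{\min}(\mathbf B)\ge\eta$ is not scale invariant) are consistent with the lemma's proviso. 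Invoking Lemma~\ref{lemma:coalition_invertibility}, with its extra hypothesis $q(\alpha_i)\neq0$, is unnecessary: invertibility of the Gram matrix for $c\mathbf B$ follows directly from that for $\mathbf B$.
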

\begin{proof}
See Appendix~\ref{app:scale_invariance}.
\end{proof}

By Lemma~\ref{lemma:scale_invariance}, the objective
$\rho(\mathbf A,\mathbf B)$ is invariant to any nonzero scaling of
$\mathbf B$. We therefore impose $\|\mathbf B\|_F=1$ to remove this
redundant degree of freedom. Under this normalization,
$\sigma_{\min}(\mathbf B)\geq\eta$ is feasible if and only if
$0<\eta\leq1/\sqrt{t}$, since
$\|\mathbf B\|_F^2=\sum_{i=1}^{t}\sigma_i^2(\mathbf B)=1$, with equality
attained, for example, by $\mathbf B=\mathbf I_t/\sqrt{t}$.
The resulting privacy calibrated basis design problem is
\begin{mdframed}
\begin{problem}
\label{practical optimisation problem}
For the given system parameters $N$, $K$, $t$, $m$, $n$, $r$, $\theta$, $\epsilon$ and $\delta$, solve:

\begin{equation}
\begin{aligned}
(\mathbf A^\star,\mathbf B^\star)
\in\arg\min_{\mathbf A,\mathbf B}\quad
&\rho(\mathbf A,\mathbf B)\\
\mathrm{s.t.}\quad
&\|\mathbf B\|_F=1, \quad \sigma_{\min}(\mathbf B)\geq\eta .
\end{aligned}
\label{eq:final_opt}
\end{equation}
\end{problem}
\end{mdframed}

Problem~\ref{practical optimisation problem} is non convex \cite{boyd_convex} due to the joint dependence of $\rho(\mathbf A,\mathbf B)$ on the basis parameters and the privacy calibrated noise level. We solve it numerically using a multi-start SQP method \cite{SQP}. The normalization $\|\mathbf B\|_F=1$ removes the scale ambiguity in the noise basis, while $\sigma_{\min}(\mathbf B)\geq\eta$ prevents singular or poorly conditioned noise-basis matrices. Since SQP is a local solver, the resulting solution is not claimed to be globally optimal. For the obtained solution $(\mathbf A^\star,\mathbf B^\star)$, the corresponding DP-compatible noise variance is $(\sigma_n^\star)^2 =\frac{4r^2t}{s^\star} \Gamma_t(\mathbf A^\star,\mathbf B^\star).$ The optimized basis functions then follow from \eqref{phi basis1} and \eqref{psi basis}.

\begin{remark}
\label{rem:surrogate_interpretation}
Under an independent zero mean relative roundoff model, the expected squared reconstruction error scales with $\sum_{i\in\mathcal S}|w_{j,i}|^2\|f(\mathbf u(\alpha_i))\|_F^2$, where $\mathcal S$ is the reconstruction worker set and $w_{j,i}$ the corresponding interpolation weights. Thus, reducing worker-computation magnitudes reduces the average numerical error, motivating $\rho(\mathbf A,\mathbf B)$ as a function-agnostic surrogate that controls the encoded share magnitude.
\end{remark}

\section{Experimental Results}
\label{sec:exp results}

We evaluate the proposed framework for
$N=15$, $K=5$, $t=3$, $m=2000$, $n=100$, $r=1000$, and $D=2$,
with $f(\mathbf X)=\mathbf X^T\mathbf X$ and i.i.d.\
$\mathcal N(0,1)$ entries clipped to $[-r,r]$.
The generalized construction uses $K$ equally spaced data-recovery
points on the circle of radius $\beta$, while conventional ALCC retains
its original $K+t$ equally spaced interpolation points on the same
circle. We set $\theta=6$, giving
$(1-e^{-36})^{6\times10^5}\geq1-1.39\times10^{-10}$, and
$\eta=0.1<1/\sqrt3$.
Problem~\ref{practical optimisation problem} is solved by multi-start
SQP with four initializations, initialized at
$a_{j,n}=0$ and $b_{r,n}=\delta_{rn}/\sqrt t$.
We use $\delta=10^{-4}$ and
$\epsilon\in\{0.25,0.5,1\}$ for DP, and
$\eta_C\in\{10^{-2},10^{-4},10^{-5}\}$ for MIS.
The normalized Frobenius error is averaged over $10^3$ trials.

\begin{figure}[t]
    \centering
    \begin{subfigure}{0.49\columnwidth}
        \centering
        \includegraphics[
            width=\linewidth,
            trim={5mm 4mm 5mm 4mm},
            clip
        ]{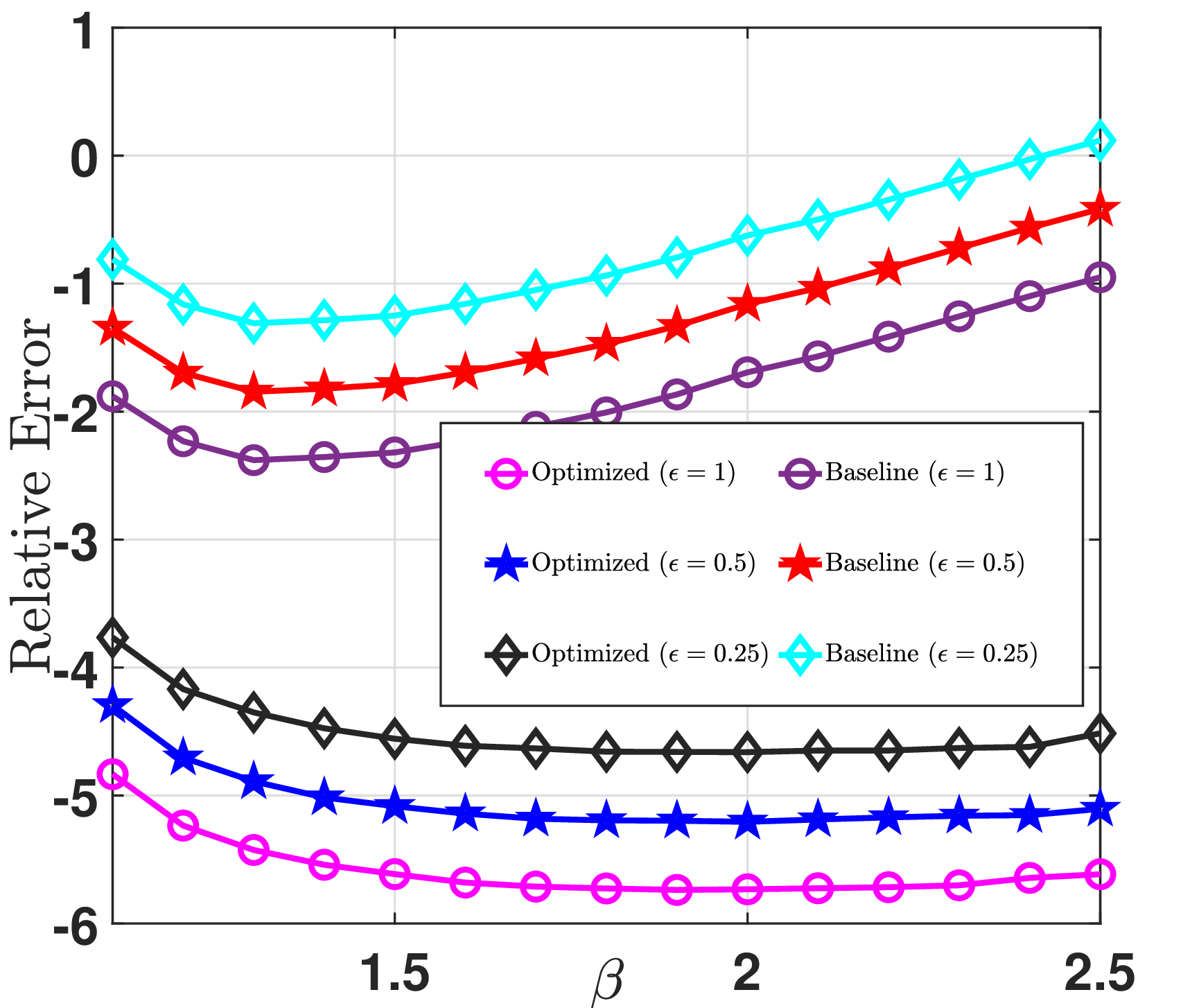}
        \caption{Differential privacy.}
        \label{fig:DPAccuracy}
    \end{subfigure}
    \hfill
    \begin{subfigure}{0.49\columnwidth}
        \centering
        \includegraphics[
            width=\linewidth,
            trim={5mm 4mm 5mm 4mm},
            clip
        ]{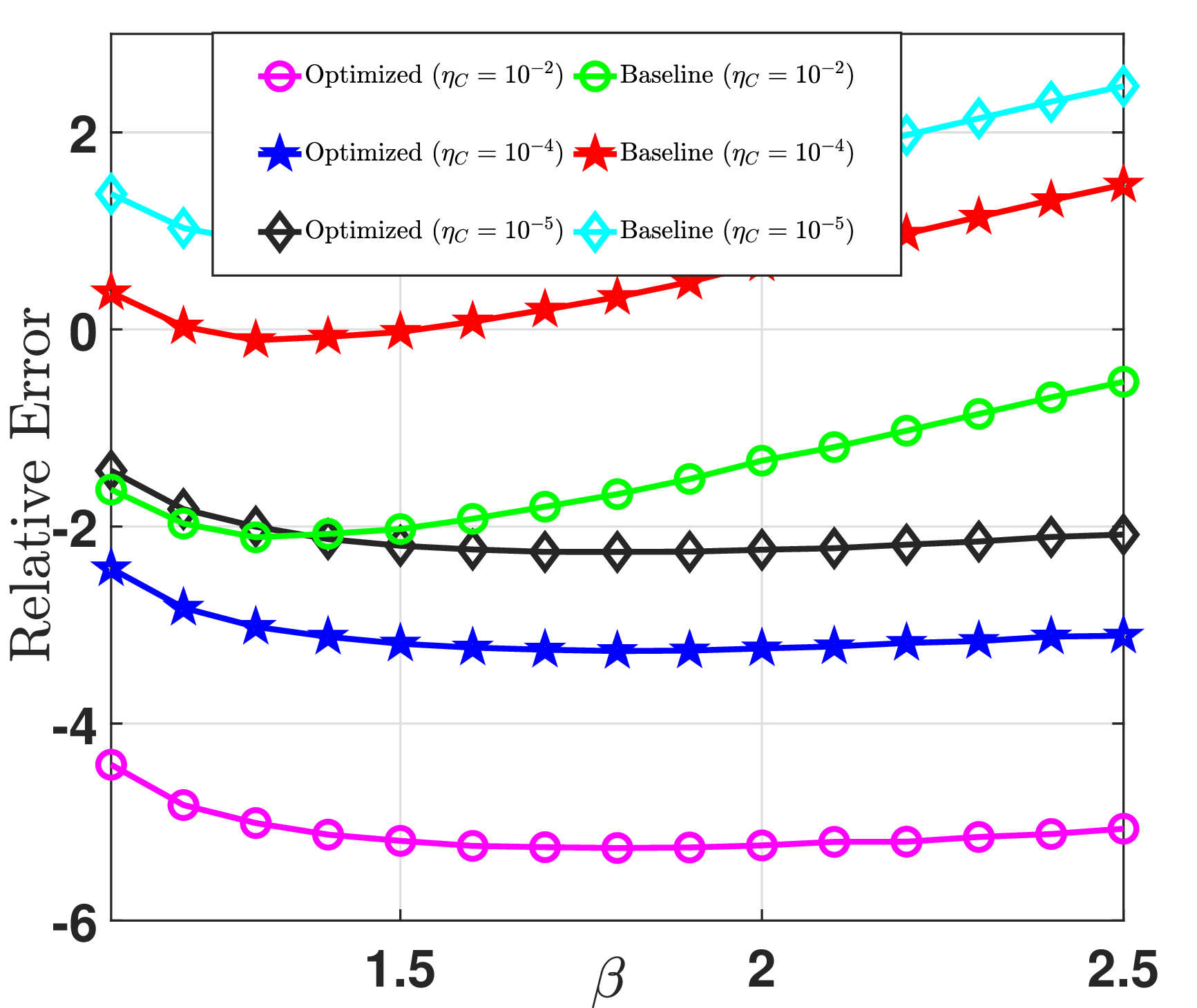}
        \caption{Mutual information security.}
        \label{fig:MISAccuracy}
    \end{subfigure}
    \vspace{-2mm}
    \caption{Reconstruction error of the optimized basis and
    conventional ALCC under DP and MIS.}
    \label{fig:AccuracyComparison}
    \vspace{-1mm}
\end{figure}
Fig.~\ref{fig:AccuracyComparison} shows lower reconstruction error for
the proposed design over the considered range, with the best performance around $\beta\in[1.7,2]$. Including MIS alongside DP further indicates that the basis-design principle is not tied to a particular privacy metric. We next consider the cubic computation $f(\mathbf X)=\mathbf X(\mathbf X^T\mathbf X)$ with $D=3$, $N=7$, $K=1$, $t=2$, $m=10^4$, $n=100$, $r=1000$, $\theta=6$, $\delta=10^{-4}$, and $\epsilon\in\{0.5,1.5\}$, to examine whether the gain persists for a higher-degree computation.
\begin{figure}[t]
    \centering
    \begin{subfigure}{0.49\columnwidth}
        \centering
        \includegraphics[width=\linewidth]{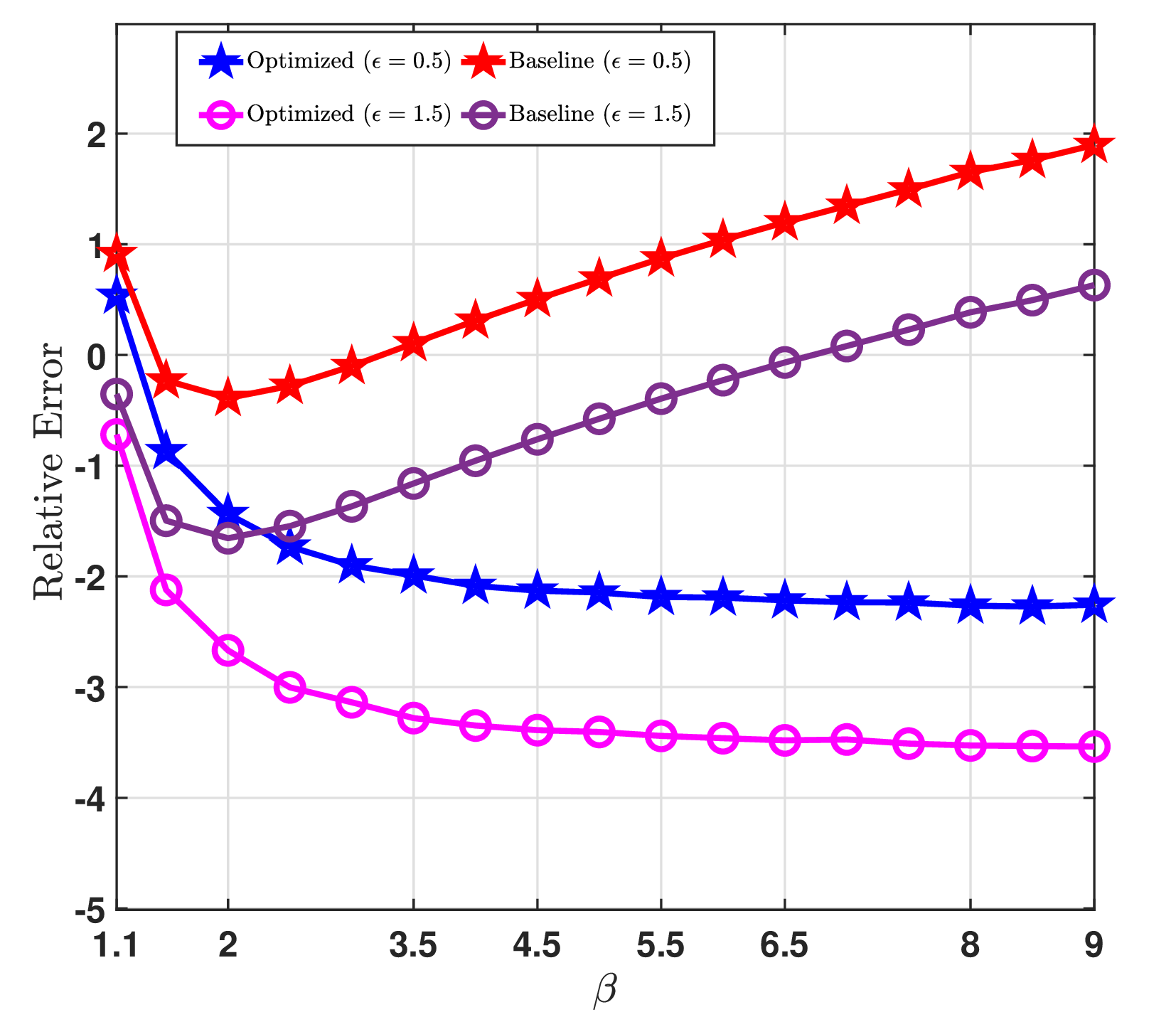}
        \caption{Relative reconstruction error.}
        \label{fig:DPAccuracy2}
    \end{subfigure}
    \hfill
    \begin{subfigure}{0.49\columnwidth}
        \centering
        \includegraphics[width=\linewidth]{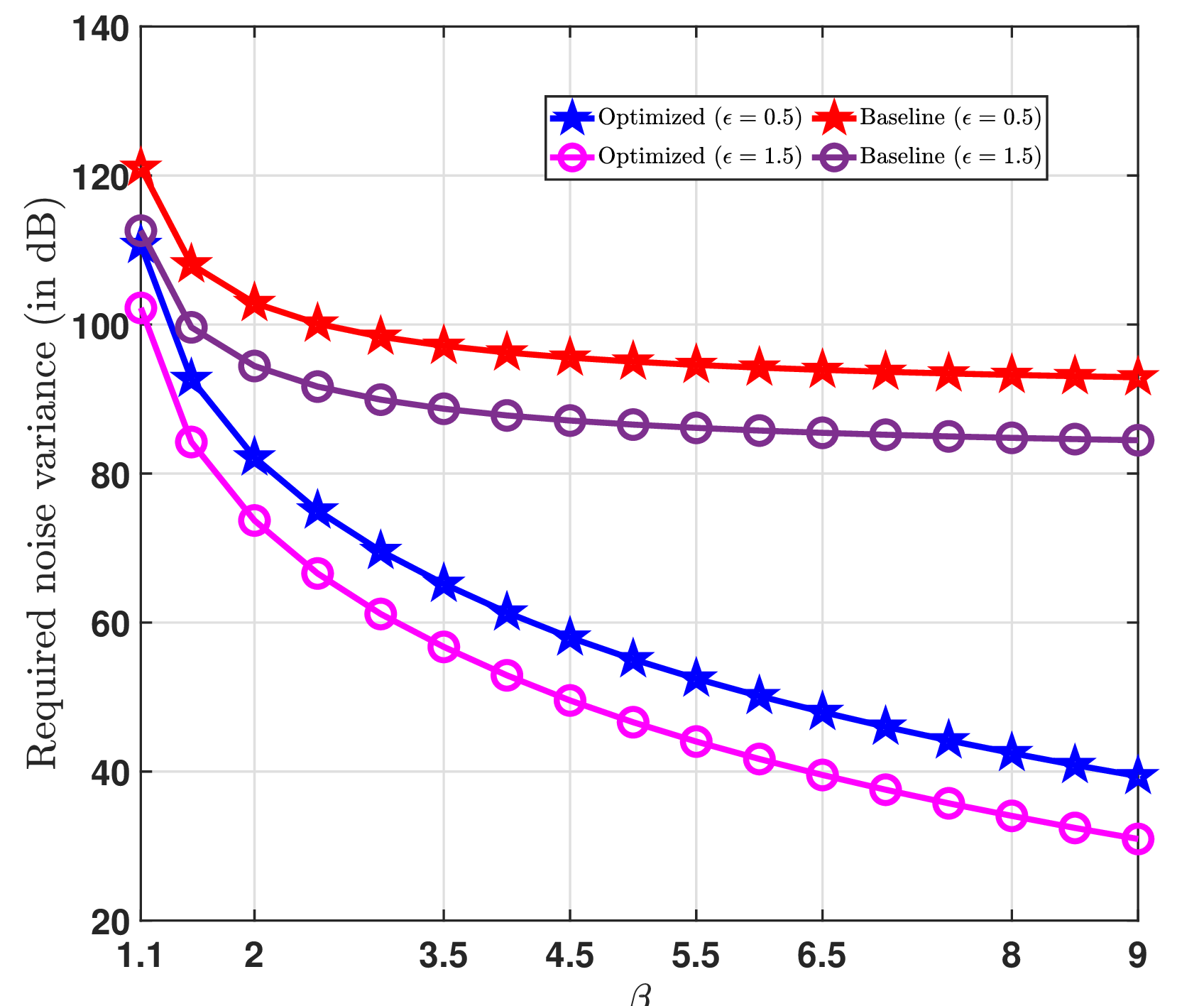}
        \caption{Required noise variance in dB.}
        \label{fig:NoiseVarRe2}
    \end{subfigure}
    \vspace{-1mm}
    \caption{Accuracy and required noise for
    $f(\mathbf X)=\mathbf X(\mathbf X^T\mathbf X)$ under
    $(\epsilon,\delta)$-DP.}
    \label{fig:AccuracyComparisonExpt2}
  \vspace{-0.5cm}
\end{figure}
Fig.~\ref{fig:AccuracyComparisonExpt2} shows that the gain over conventional ALCC persists for the cubic computation. The optimized basis also requires substantially less DP-compatible noise over the considered $\beta$ range, indicating a more favorable privacy geometry. Together, Figs.~\ref{fig:AccuracyComparison} and \ref{fig:AccuracyComparisonExpt2} provide empirical evidence that the proposed design is not restricted to a particular privacy metric or polynomial degree.

\section{Discussion}
\label{sec:discussion}
This work treats the encoding basis in analog coded computing as a joint
accuracy--privacy design variable while preserving exact recovery, with numerical gains over conventional ALCC under both DP and MIS constraints. The present optimization minimizes a basis-dependent surrogate and is solved using nonconvex multi-start SQP, future work may consider tighter end-to-end error formulations and \emph{function-dependent} objectives that exploit the structure of the target computation $f$. Further directions include joint basis and evaluation-point design, e.g., using Chebyshev nodes, and extending the proposed parameterization to approximation-based schemes for non-polynomial functions, including Berrut interpolants \cite{b6,b7} and B-spline based reconstruction \cite{bspline}. It is an interesting direction to even see how this framework works under the assumption of profiled worker nodes as considered in \cite{ncc}.

%%%%%%%%%%%%%%%%
\appendices

\section{Proofs of the Accuracy and Privacy Results}
\label{app:accuracy_privacy_proofs}

\subsection{Proof of Theorem~\ref{th:accuracy}}
\begin{proof}
\label{proof:theorem1}
Consider the encoding polynomial in \eqref{eq:New encoding} together with the parameterized basis functions defined in \eqref{phi basis1} and \eqref{psi basis}. By the interpolation and cancellation constraints, $\phi_j(\beta_\ell)=\delta_{j\ell}$ and $\psi_r(\beta_\ell)=0$, for all $j,\ell\in[K]$ and $r\in[t]$. Hence, for every $j\in[K]$,
\begin{align}
\mathbf u(\beta_j)
&=
\sum_{\ell=1}^{K}\phi_\ell(\beta_j)\mathbf X_\ell
+
\sum_{r=1}^{t}\psi_r(\beta_j)\mathbf N_r
=
\mathbf X_j .
\label{eq:recover_Xj}
\end{align}
Therefore, if $g(z)=f(\mathbf u(z))$, then \eqref{eq:recover_Xj} implies
\[
g(\beta_j)=f(\mathbf X_j), \qquad j\in[K].
\]
Since $\phi_j,\psi_r\in\mathcal P_{K+t-1}$, we have
$\deg \mathbf u(z)\le K+t-1$. As $f$ is a polynomial of degree at most $D$, it follows that
$\deg g(z)\le D(K+t-1)=\widetilde D$. Hence, $g(z)$ can be reconstructed from any $\widetilde D+1$ worker evaluations.

We next upper bound the magnitude of the encoded shares at the worker nodes. The noise matrices are not assumed to be deterministically bounded. Instead, consider the event $\mathcal E_\theta$ defined as:

\[
\label{eq:noise_event}
\mathcal E_\theta
\triangleq
\left\{
\max_{r\in[t]}
\|\mathbf N_r\|_{\max}
\leq
\frac{\theta\sigma_n}{\sqrt t}
\right\}.
\]

Since each noise entry satisfies
$(\mathbf N_r)_{pq}\sim\mathcal{CN}(0,\sigma_n^2/t)$, we have $\mathbb{P}\left(
|(\mathbf N_r)_{pq}|>
\frac{\theta\sigma_n}{\sqrt t}
\right)
=
e^{-\theta^2}.$ Since the $tmn$ noise entries are independent, the event
$\mathcal E_\theta$ occurs with probability
\[
\mathbb{P}(\mathcal E_\theta)
=
\left(1-e^{-\theta^2}\right)^{tmn}
\geq
1-tmn e^{-\theta^2},
\]
where the last inequality follows from the union bound. Whenever
$\mathcal E_\theta$ occurs,
$\|\mathbf N_r\|_{\max}\leq\theta\sigma_n/\sqrt t$ for every
$r\in[t]$. Therefore, using $\|\mathbf X_j\|_{\max}\le r$, for every worker evaluation point $\alpha_i$,
\[
\|\mathbf u(\alpha_i)\|_{\max}
\le
r\sum_{j=1}^{K}|\phi_j(\alpha_i)|
+
\frac{\theta\sigma_n}{\sqrt t}
\sum_{r=1}^{t}|\psi_r(\alpha_i)|
\le M,
\]
where $M$ is defined in \eqref{eq:M_a}.
Going ahead, fix an output entry $(p,q)$. Since $f_{pq}$ is a polynomial of degree at most $D$ in the $mn$ entries of $\mathbf u(\alpha_i)$, and since the magnitude of each coefficient is bounded by $c$, on $\mathcal E_\theta$ we have
\[
|g_{pq}(\alpha_i)|
\le
c\sum_{d=0}^{D}
{mn+d-1\choose d}M^d .
\]
The quantity ${mn+d-1\choose d}$ counts the number of monomials of total degree $d$ in $mn$ variables. Using the standard bound
${mn+d-1\choose d}\le (emn)^d$, we obtain
\[
|g_{pq}(\alpha_i)|
\le
c\sum_{d=0}^{D}(emn)^dM^d
\le
c(D+1)\max\{1,(emn)^DM^D\}.
\]
Under the assumption $(emn)M\ge1$, this reduces to
\begin{equation}
\label{eq:g_bound}
|g_{pq}(\alpha_i)|
\le
c(D+1)(emn)^D M^D .
\end{equation}
Let $\mathbf g_{pq}\in\mathbb C^{\widetilde D+1}$ denote the vector of returned evaluations corresponding to entry $(p,q)$ from the workers used for reconstruction. Using \eqref{eq:g_bound},
\begin{equation}
\label{eq:g_norm}
\|\mathbf g_{pq}\|_2
\le
\sqrt{\widetilde D+1}\,
c(D+1)(emn)^D M^D .
\end{equation}
We now characterize the interpolation error introduced by finite-precision arithmetic. Let
$g_{pq}(z)=\sum_{\ell=0}^{\widetilde D}a_{\ell,pq}z^\ell$ and
$\mathbf a_{pq}=(a_{0,pq},a_{1,pq},\ldots,a_{\widetilde D,pq})^T$.
The interpolation step can be written as $\mathbf g_{pq}
=
\mathcal V\mathbf a_{pq},$ where $\mathcal V$ is the Vandermonde matrix formed from the selected $\widetilde D+1$ worker evaluation points. Since $\mathbf a_{pq}=\mathcal V^{-1}\mathbf g_{pq}$,
\begin{equation}
\label{eq:a_bound}
\|\mathbf a_{pq}\|_2
\le
\|\mathcal V^{-1}\|_2\|\mathbf g_{pq}\|_2
=
\frac{1}{\sigma_{\min}(\mathcal V)}
\|\mathbf g_{pq}\|_2 .
\end{equation}
Further, from the finite-precision interpolation model in
\eqref{eq:finite_precision_interpolation},
\[
\|\widehat{\mathbf a}_{pq}-\mathbf a_{pq}\|_2
\le
\kappa_2(\mathcal V)\|\mathbf a_{pq}\|_2 2^{-b_m}.
\]
Combining this with \eqref{eq:a_bound} and \eqref{eq:g_norm}, we obtain
\begin{equation}
\label{eq:coeff_final}
\|\widehat{\mathbf a}_{pq}-\mathbf a_{pq}\|_2
\le
\frac{\kappa_2(\mathcal V)}{\sigma_{\min}(\mathcal V)}
\sqrt{\widetilde D+1}\,
c(D+1)(emn)^D M^D 2^{-b_m}.
\end{equation}
Finally, we propagate the coefficient perturbation to the reconstructed function value at $\beta_j$. Since the reconstructed value is obtained by evaluating the interpolated polynomial at $\beta_j$,
\[
\widehat f(\mathbf X_j)_{pq}-f(\mathbf X_j)_{pq}
=
\sum_{\ell=0}^{\widetilde D}
(\widehat a_{\ell,pq}-a_{\ell,pq})\beta_j^\ell .
\]
Applying the Cauchy-Schwarz inequality gives
\begin{equation}
\label{eq:cs_bound}
|\widehat f(\mathbf X_j)_{pq}-f(\mathbf X_j)_{pq}|
\le
\|(1,\beta_j,\ldots,\beta_j^{\widetilde D})\|_2
\|\widehat{\mathbf a}_{pq}-\mathbf a_{pq}\|_2 .
\end{equation}
Since $|\beta_j|=\beta$,
\[
\|(1,\beta_j,\ldots,\beta_j^{\widetilde D})\|_2^2
=
\sum_{\ell=0}^{\widetilde D}\beta^{2\ell}
=
\Theta_\beta .
\]
Substituting this and \eqref{eq:coeff_final} into \eqref{eq:cs_bound}, we obtain
\begin{align}
|\widehat f(\mathbf X_j)_{pq}-f(\mathbf X_j)_{pq}|
\le\;&
\frac{
\sqrt{\Theta_\beta}\kappa_2(\mathcal V)
\sqrt{\widetilde D+1}
}{
\sigma_{\min}(\mathcal V)
}
\notag\\
&\times
c(D+1)(emn)^D M^D 2^{-b_m}.
\label{eq:absolute_error_final}
\end{align}

Thus, \eqref{eq:absolute_error_final} holds whenever $\mathcal E_\theta$ occurs. Since,
\[
\mathbb{P}(\mathcal E_\theta)
=
\left(1-e^{-\theta^2}\right)^{tmn}
\ge
1-tmn e^{-\theta^2}.
\]
Therefore, the bound in \eqref{eq:absolute_error_final} holds with the probability stated in Theorem~\ref{th:accuracy}, which completes the proof.
\end{proof}
%\noindent\rule{\linewidth}{0.6pt}
\subsection{Proof of Lemma~\ref{lem:choice_theta}}
\label{proof:lemma1}
\begin{proof}
From the proof of Theorem~\ref{th:accuracy}, the reconstruction
error bound in \eqref{accuracy bound} holds whenever the event
\[
\mathcal E_\theta = \left\{\max_{r\in[t]}\|\mathbf N_r\|_{\max}\leq\frac{\theta\sigma_n}{\sqrt t}\right\}
\]
occurs. Since the $tmn$ entries of the noise matrices are independent and each entry satisfies
$(\mathbf N_r)_{pq}\sim\mathcal{CN}(0,\sigma_n^2/t)$, we have
\[
\Pr(\mathcal E_\theta)
=
\left(1-e^{-\theta^2}\right)^{tmn}.
\]
Therefore, in order for the reconstruction error bound in
\eqref{accuracy bound} to hold with probability at least $1-\zeta$, it is sufficient to require
\[
\left(1-e^{-\theta^2}\right)^{tmn}
\geq
1-\zeta.
\]
Since both sides lie in $[0,1]$, this is equivalent to
\[
1-e^{-\theta^2}
\geq
(1-\zeta)^{1/(tmn)},
\]
or,
\[
e^{-\theta^2}
\leq
1-(1-\zeta)^{1/(tmn)}.
\]
Taking the logarithm on both sides and using $\theta>0$ gives
\[
\theta
\geq
\sqrt{
-\log\left(
1-(1-\zeta)^{1/(tmn)}
\right)
}.
\]
Hence, the smallest value of $\theta$ satisfying the above condition is
\[
\theta_{\min}
=
\sqrt{
-\log\left(
1-(1-\zeta)^{1/(tmn)}
\right)
},
\]
and any $\theta\geq\theta_{\min}$ ensures that the reconstruction error bound in \eqref{accuracy bound} holds with probability at least $1-\zeta$.
\end{proof}

%\noindent\rule{\linewidth}{0.6pt}
\subsection{Proof of Theorem~\ref{th:dp}}
\label{proof:theorem2}
\begin{proof}
Fix a colluding set $\mathcal T\subset[N]$ with
$|\mathcal T|=t$. From the observation model in
\eqref{coalition model}, the encoded shares available to the colluding
workers are
\[
\mathbf y_{\mathcal T}
=
(\mathbf \Phi_{\mathcal T}\otimes \mathbf I_{mn})\mathbf x
+
(\mathbf \Psi_{\mathcal T}\otimes \mathbf I_{mn})\mathbf n.
\]
Define
$\mathbf A_{\mathcal T}
=
\mathbf\Phi_{\mathcal T}\otimes\mathbf I_{mn}$
and
$\boldsymbol\eta_{\mathcal T}
=
(\mathbf\Psi_{\mathcal T}\otimes\mathbf I_{mn})\mathbf n$.
Then
\begin{equation}
\label{eq:dp_linear_model}
\mathbf y_{\mathcal T}
=
\mathbf A_{\mathcal T}\mathbf x
+
\boldsymbol\eta_{\mathcal T}.
\end{equation}

Since the entries of the noise matrices are independently drawn from a
circularly symmetric complex Gaussian distribution with variance
$\sigma_n^2/t$, we have $\boldsymbol\eta_{\mathcal T}
\sim
\mathcal{CN}
\left(
0,\mathbf\Sigma_{\mathcal T}
\right),$
where $\mathbf\Sigma_{\mathcal T}
=
\frac{\sigma_n^2}{t}
(\mathbf\Psi_{\mathcal T}\mathbf\Psi_{\mathcal T}^{H})
\otimes
\mathbf I_{mn}.$ Now, consider two neighboring datasets
$\mathbf X\sim\mathbf X'$ under the definition that they differ in at most one scalar entry, and let $\mathbf x,\mathbf x'$ denote their vectorized forms. Conditioned on $\mathbf x$, the observation in \eqref{eq:dp_linear_model} follows $P =\mathcal{CN}\left(\mathbf A_{\mathcal T}\mathbf x,\mathbf\Sigma_{\mathcal T} \right),$ whereas conditioned on $\mathbf x'$, it follows $Q =\mathcal{CN}\left(\mathbf A_{\mathcal T}\mathbf x',\mathbf\Sigma_{\mathcal T}\right).$ For compactness, define $\boldsymbol\mu=\mathbf A_{\mathcal T}\mathbf x,\\boldsymbol\mu'=\mathbf A_{\mathcal T}\mathbf x',\\mathbf d=\boldsymbol\mu-\boldsymbol\mu'.$ The privacy-loss random variable is
\[
L(\mathbf y)=\log\frac{p(\mathbf y)}{q(\mathbf y)}.
\]
Substituting the complex Gaussian densities of $P$ and $Q$ and simplifying gives
\begin{align}
L(\mathbf y)
&=
-(\mathbf y-\boldsymbol\mu)^H
\mathbf\Sigma_{\mathcal T}^{-1}
(\mathbf y-\boldsymbol\mu)
+
(\mathbf y-\boldsymbol\mu')^H
\mathbf\Sigma_{\mathcal T}^{-1}
(\mathbf y-\boldsymbol\mu')
\notag\\
&=
2\Re\!\left\{
\mathbf d^H
\mathbf\Sigma_{\mathcal T}^{-1}
(\mathbf y-\boldsymbol\mu')
\right\}
-
\mathbf d^H
\mathbf\Sigma_{\mathcal T}^{-1}
\mathbf d .
\label{eq:dp_privacy_loss}
\end{align}

Further, define
\begin{equation}
\label{eq:dp_s_pair}
s(\mathbf x,\mathbf x')
=
\mathbf d^H
\mathbf\Sigma_{\mathcal T}^{-1}
\mathbf d
=
(\mathbf x-\mathbf x')^H
\mathbf A_{\mathcal T}^{H}
\mathbf\Sigma_{\mathcal T}^{-1}
\mathbf A_{\mathcal T}
(\mathbf x-\mathbf x').
\end{equation}
We next characterize the distribution of $L(\mathbf y)$. If
$\mathbf y\sim Q$, then
$\mathbf y-\boldsymbol\mu'
\sim\mathcal{CN}(0,\mathbf\Sigma_{\mathcal T})$, and therefore
\[
\mathbf d^H
\mathbf\Sigma_{\mathcal T}^{-1}
(\mathbf y-\boldsymbol\mu')
\sim
\mathcal{CN}
\left(
0,s(\mathbf x,\mathbf x')
\right).
\]
Since the real part of a circularly symmetric complex Gaussian random
variable $\mathcal{CN}(0,s)$ has variance $s/2$, it follows from
\eqref{eq:dp_privacy_loss} that
\[
L\mid Q
\sim
\mathcal N
\left(
-s(\mathbf x,\mathbf x'),
2s(\mathbf x,\mathbf x')
\right).
\]
Similarly, if $\mathbf y\sim P$, then
$\mathbf y-\boldsymbol\mu'
=
(\mathbf y-\boldsymbol\mu)+\mathbf d$, and hence
\[
L\mid P
\sim
\mathcal N
\left(
s(\mathbf x,\mathbf x'),
2s(\mathbf x,\mathbf x')
\right).
\]
For a fixed neighboring pair $(\mathbf x,\mathbf x')$, the smallest
$\delta$ satisfying $P(\mathcal S)
\leq
e^\epsilon Q(\mathcal S)+\delta$
for every measurable set $\mathcal S$ is
\[
\delta(\epsilon)
=
\sup_{\mathcal S}
\left[
P(\mathcal S)-e^\epsilon Q(\mathcal S)
\right].
\]
We now show that a maximizing set is $\mathcal S^\star
=
\{\mathbf y:L(\mathbf y)>\epsilon\}.$
For any measurable set $\mathcal S$,
\begin{align*}
P(\mathcal S)-e^\epsilon Q(\mathcal S)
&=
\int_{\mathcal S}
\left(
p(\mathbf y)-e^\epsilon q(\mathbf y)
\right)d\mathbf y\\
&=
\int_{\mathcal S}
\left(
e^{L(\mathbf y)}-e^\epsilon
\right)
q(\mathbf y)\,d\mathbf y.
\end{align*}
Define $h(\mathbf y)
=
\left(
e^{L(\mathbf y)}-e^\epsilon
\right)q(\mathbf y).$
Since the Gaussian density satisfies $q(\mathbf y)>0$ for every
$\mathbf y$, $h(\mathbf y)>0
\ \Longleftrightarrow \
L(\mathbf y)>\epsilon.$
Hence,
\[
\mathcal S^\star
=
\{\mathbf y:h(\mathbf y)>0\}
=
\{\mathbf y:L(\mathbf y)>\epsilon\}.
\]
For an arbitrary measurable set $\mathcal S$, $\mathcal S^\star
=
(\mathcal S^\star\cap\mathcal S)
\cup
(\mathcal S^\star\setminus\mathcal S),$
and $\mathcal S
=
(\mathcal S^\star\cap\mathcal S)
\cup
(\mathcal S\setminus\mathcal S^\star),$
where both unions are disjoint. Therefore,
\[
\int_{\mathcal S^\star}h(\mathbf y)d\mathbf y
-
\int_{\mathcal S}h(\mathbf y)d\mathbf y
=
\int_{\mathcal S^\star\setminus\mathcal S}
h(\mathbf y)d\mathbf y
-
\int_{\mathcal S\setminus\mathcal S^\star}
h(\mathbf y)d\mathbf y.
\]
By definition of $\mathcal S^\star$,
$h(\mathbf y)>0$ on
$\mathcal S^\star\setminus\mathcal S$, whereas
$h(\mathbf y)\leq0$ on
$\mathcal S\setminus\mathcal S^\star$. Hence,
\[
\int_{\mathcal S^\star}h(\mathbf y)d\mathbf y
\geq
\int_{\mathcal S}h(\mathbf y)d\mathbf y.
\]
Equivalently, $P(\mathcal S)-e^\epsilon Q(\mathcal S)
\leq
P(\mathcal S^\star)-e^\epsilon Q(\mathcal S^\star),$
which proves that $\mathcal S^\star$ is a maximizing set. Consequently, $\delta(\epsilon)
=
P(L>\epsilon)-e^\epsilon Q(L>\epsilon).$
Using $L\mid P\sim\mathcal N(s,2s),
\qquad
L\mid Q\sim\mathcal N(-s,2s),$
where $s=s(\mathbf x,\mathbf x')$, we obtain
\[
P(L>\epsilon)
=
\Phi\!\left(
\frac{s-\epsilon}{\sqrt{2s}}
\right),
\]
\[
Q(L>\epsilon)
=
\Phi\!\left(
\frac{-s-\epsilon}{\sqrt{2s}}
\right).
\]
Hence, for the fixed neighboring pair,
\[
\delta(\epsilon)
=
\Phi\!\left(
\frac{s-\epsilon}{\sqrt{2s}}
\right)
-
e^\epsilon
\Phi\!\left(
\frac{-s-\epsilon}{\sqrt{2s}}
\right)
=
g(s;\epsilon).
\]
Taking the worst case over all neighboring pairs for the fixed
coalition $\mathcal T$, define
\[
s_{\mathcal T}
=
\sup_{\mathbf X\sim\mathbf X'}
(\mathbf x-\mathbf x')^H
\mathbf A_{\mathcal T}^{H}
\mathbf\Sigma_{\mathcal T}^{-1}
\mathbf A_{\mathcal T}
(\mathbf x-\mathbf x').
\]
Since $g(s;\epsilon)$ is increasing in $s$, the coalition-level
privacy profile is $\delta_{\mathcal T}(\epsilon)
=
g(s_{\mathcal T};\epsilon).$
It remains to compute $s_{\mathcal T}$. Using the Kronecker structure,
\[
\mathbf A_{\mathcal T}
=
\mathbf\Phi_{\mathcal T}\otimes\mathbf I_{mn}
\]
\[
\mathbf\Sigma_{\mathcal T}
=
\frac{\sigma_n^2}{t}
(\mathbf\Psi_{\mathcal T}\mathbf\Psi_{\mathcal T}^{H})
\otimes
\mathbf I_{mn},
\]
we obtain
\begin{equation}
\label{eq:dp_quad_simplified}
\mathbf A_{\mathcal T}^{H}
\mathbf\Sigma_{\mathcal T}^{-1}
\mathbf A_{\mathcal T}
=
\frac{t}{\sigma_n^2}
\left(
\mathbf\Phi_{\mathcal T}^{H}
(\mathbf\Psi_{\mathcal T}\mathbf\Psi_{\mathcal T}^{H})^{-1}
\mathbf\Phi_{\mathcal T}
\right)
\otimes
\mathbf I_{mn}.
\end{equation}
Defining $\mathbf G_{\mathcal T}
=
\mathbf\Phi_{\mathcal T}^{H}
(\mathbf\Psi_{\mathcal T}\mathbf\Psi_{\mathcal T}^{H})^{-1}
\mathbf\Phi_{\mathcal T},$
equation \eqref{eq:dp_quad_simplified} becomes
\begin{equation}
\label{eq:dp_quad_GT}
\mathbf A_{\mathcal T}^{H}
\mathbf\Sigma_{\mathcal T}^{-1}
\mathbf A_{\mathcal T}
=
\frac{t}{\sigma_n^2}
(\mathbf G_{\mathcal T}\otimes\mathbf I_{mn}).
\end{equation}
Using the definition that two neighboring datasets differ in at most one scalar entry. Hence, there exist $j\in[K]$, $(p,q)\in[m]\times[n]$, and a scalar $\mathbf A$ with $|a|\leq2r$, such that $\mathbf x-\mathbf x'
=
\mathbf e_j\otimes a\mathbf e_{pq},$
where $\mathbf e_j$ is the $j$-th canonical basis vector and
$\mathbf e_{pq}$ is the canonical basis vector corresponding to entry
$(p,q)$ after vectorization. Substituting
\eqref{eq:dp_quad_GT} into \eqref{eq:dp_s_pair}, we obtain
\begin{equation}
\label{eq:privacy_quadratic_form}
\begin{aligned}
s(\mathbf x,\mathbf x')
&=
\frac{t}{\sigma_n^2}
(\mathbf e_j\otimes a\mathbf e_{pq})^H
(\mathbf G_{\mathcal T}\otimes\mathbf I_{mn})
(\mathbf e_j\otimes a\mathbf e_{pq})\\
&=
\frac{t}{\sigma_n^2}
(\mathbf G_{\mathcal T})_{jj}|a|^2.
\end{aligned}
\end{equation}
Since $|a|\leq2r$, taking the supremum over all neighboring datasets
gives
\[
\label{eq:dp_sT_final}
s_{\mathcal T}
=
\frac{4r^2t}{\sigma_n^2}
\max_{j\in[K]}
(\mathbf G_{\mathcal T})_{jj}.
\]
Therefore, for every coalition $\mathcal T$ of size $t$, the
corresponding privacy parameter is $\delta_{\mathcal T}(\epsilon)
=
g(s_{\mathcal T};\epsilon).$
Thus, the proposed framework satisfies $(\epsilon,\delta)$-DP
uniformly over all coalitions of size $t$ whenever
\[
\delta
\geq
\max_{\substack{\mathcal T\subset[N]\\|\mathcal T|=t}}
g(s_{\mathcal T};\epsilon).
\]
Finally, consider any coalition $\mathcal S\subset[N]$ with
$|\mathcal S|<t$. Choose a set $\mathcal T\supset\mathcal S$ with
$|\mathcal T|=t$. The observations available to the workers in
$\mathcal S$ are obtained from $\mathbf y_{\mathcal T}$ by discarding
the components corresponding to the workers in $\mathcal T\setminus\mathcal S$. Hence, the observation of $\mathcal S$ is a deterministic post-processing of the observation of $\mathcal T$. By the post-processing property of differential privacy, the same $(\epsilon,\delta)$ guarantee therefore holds for every coalition of size less than $t$. This proves the result for any set of at most $t$ colluding worker nodes.
\end{proof}
%\noindent\rule{\linewidth}{0.6pt}
\subsection{Proof of Theorem~\ref{th:min_noise_var}}
\begin{proof}
From Corollary~\ref{Corollary 2}, the proposed framework satisfies \((\epsilon,\delta)\)-DP whenever
\[
\delta
\ge
g\!\left(
\frac{4r^2t}{\sigma_n^2}\Gamma_t;\epsilon
\right),
\]
where $g(s;\epsilon)
=
\Phi\!\left(
\frac{s-\epsilon}{\sqrt{2s}}
\right)
-
e^\epsilon
\Phi\!\left(
\frac{-s-\epsilon}{\sqrt{2s}}
\right).$ 

Define
\begin{equation}
\label{eq:s_prot_def}
s_{\mathrm{prot}}
=
\frac{4r^2t}{\sigma_n^2}\Gamma_t .
\end{equation}
Then the above DP condition can be written equivalently as
\begin{equation}
\label{eq:delta_condition}
g(s_{\mathrm{prot}};\epsilon)\le \delta .
\end{equation}
For a fixed value of \(\epsilon>0\), the Gaussian privacy profile
\(g(s;\epsilon)\) is monotonically increasing in \(s\). Let
\(s^\star>0\) denote the unique solution satisfying $g(s^\star;\epsilon)=\delta .$
By monotonicity of \(g(s;\epsilon)\) (which is proved subsequently), the condition
in \eqref{eq:delta_condition} holds if and only if
\begin{equation}
\label{eq:s_prot_bound}
s_{\mathrm{prot}}\le s^\star .
\end{equation}
Substituting \eqref{eq:s_prot_def} into
\eqref{eq:s_prot_bound}, we obtain
\[
\frac{4r^2t}{\sigma_n^2}\Gamma_t
\le
s^\star .
\]
Rearranging the above inequality yields
\begin{equation}
\label{eq:sigma_bound_final}
\sigma_n^2
\ge
\frac{4r^2t}{s^\star}\Gamma_t .
\end{equation}
Therefore, whenever \eqref{eq:sigma_bound_final} holds, the proposed
framework satisfies \((\epsilon,\delta)\)-DP.
\end{proof}

\subsection{Proof that \(g(s;\epsilon)\) is monotonically increasing in \(s>0\) for fixed \(\epsilon>0\)}
\begin{proof}
Recall that $g(s;\epsilon)
=
\Phi\!\left(
\frac{s-\epsilon}{\sqrt{2s}}
\right)
-
e^\epsilon
\Phi\!\left(
\frac{-s-\epsilon}{\sqrt{2s}}
\right),\ s>0.$
Define
\[
a(s)=\frac{s-\epsilon}{\sqrt{2s}},
\qquad
b(s)=\frac{-s-\epsilon}{\sqrt{2s}}.
\]
Then $g(s;\epsilon)=\Phi(a(s))-e^\epsilon\Phi(b(s)).$
Differentiating with respect to \(s\), and using
\(\Phi'(x)=\varphi(x)\), where \(\varphi(x)\) is the standard Gaussian
density, gives
\[
\frac{d}{ds}g(s;\epsilon)
=
\varphi(a(s))a'(s)
-
e^\epsilon\varphi(b(s))b'(s).
\]
Now,
\[
a'(s)=\frac{s+\epsilon}{2\sqrt{2}s^{3/2}},
\qquad
b'(s)=\frac{s-\epsilon}{2\sqrt{2}s^{3/2}}.
\]
Also, $b(s)^2-a(s)^2=2\epsilon.$ Therefore, $\varphi(b(s))
=
e^{-\epsilon}\varphi(a(s)),$
which implies $e^\epsilon\varphi(b(s))=\varphi(a(s)).$ Substituting this into the derivative gives
\[
\frac{d}{ds}g(s;\epsilon)
=
\varphi(a(s))\left(a'(s)-b'(s)\right).
\]
Using the expressions for \(a'(s)\) and \(b'(s)\),
\[
a'(s)-b'(s)
=
\frac{\epsilon}{\sqrt{2}s^{3/2}}.
\]
Hence,
\[
\frac{d}{ds}g(s;\epsilon)
=
\varphi(a(s))
\frac{\epsilon}{\sqrt{2}s^{3/2}}.
\]
Since \(\varphi(a(s))>0\), \(\epsilon>0\), and \(s>0\), we have
\[
\frac{d}{ds}g(s;\epsilon)>0.
\]
Thus, \(g(s;\epsilon)\) is strictly monotonically increasing in \(s>0\)
for every fixed \(\epsilon>0\).
\end{proof}
%\noindent\rule{\linewidth}{0.6pt}

\section{Proofs of the Basis-Design Results}
\label{app:basis_design_analysis}

\subsection{Proof of Lemma~\ref{lemma:coalition_invertibility}}
\label{app:coalition_matrix_structure}

\begin{proof}
Recall that the noise-basis functions are parameterized as $\psi_r(z)
=
q(z)\xi_r(z),
\ r\in[t],$
where $\xi_r(z)\in\mathcal P_{t-1}$. Let $\mathbf v(z)
=
\begin{bmatrix}
1 & z & \cdots & z^{t-1}
\end{bmatrix}^{T}.$
If $\mathbf b_r^{T}$ denotes the $r$-th row of
$\mathbf B$, then $\xi_r(z)
=
\mathbf v(z)^{T}\mathbf b_r.$
Consider a colluding worker node set
$\mathcal T=\{i_1,\ldots,i_t\}\subset[N]$.
For $\ell,r\in[t]$,
\begin{align}
[\mathbf\Psi_{\mathcal T}(\mathbf B)]_{\ell,r}
&=
\psi_r(\alpha_{i_\ell})
\notag\\
&=
q(\alpha_{i_\ell})
\mathbf v(\alpha_{i_\ell})^{T}\mathbf b_r.
\notag
\end{align}
Therefore,
\begin{equation}
\mathbf\Psi_{\mathcal T}(\mathbf B)
=
\mathbf H_{\mathcal T}\mathbf B^{T},
\label{eq:app_coalition_factorization}
\end{equation}
where $\mathbf H_{\mathcal T}
=
\mathbf D_{\mathcal T}\mathbf V_{\mathcal T},$
with $\mathbf D_{\mathcal T}
=
\operatorname{diag}
\left(
q(\alpha_{i_1}),
\ldots,
q(\alpha_{i_t})
\right)$
and
\begin{equation}
\mathbf V_{\mathcal T}
=
\begin{bmatrix}
1 & \alpha_{i_1} & \cdots & \alpha_{i_1}^{t-1}\\
1 & \alpha_{i_2} & \cdots & \alpha_{i_2}^{t-1}\\
\vdots & \vdots & \ddots & \vdots\\
1 & \alpha_{i_t} & \cdots & \alpha_{i_t}^{t-1}
\end{bmatrix}.
\notag
\end{equation}

Since the worker evaluation points are pairwise distinct,
$\mathbf V_{\mathcal T}$ is a nonsingular Vandermonde matrix.
Furthermore, by assumption, $q(\alpha_i)\neq0,
\ i\in[N],$
and hence $\mathbf D_{\mathcal T}$ is nonsingular. Therefore,
$\mathbf H_{\mathcal T}$ is nonsingular for every
$\mathcal T\subset[N]$ with $|\mathcal T|=t$.

Using \eqref{eq:app_coalition_factorization}, $\operatorname{rank}
\bigl(
\mathbf\Psi_{\mathcal T}(\mathbf B)
\bigr) = \operatorname{rank}\bigl(\mathbf H_{\mathcal T}\mathbf B^{T}\bigr)=\operatorname{rank}(\mathbf B^{T})=
\operatorname{rank}(\mathbf B).$
Hence,
$\mathbf\Psi_{\mathcal T}(\mathbf B)$ is nonsingular if and
only if $\mathbf B$ is nonsingular. Moreover, using the standard singular-value inequality for
products of square matrices, $\sigma_{\min}
\bigl(
\mathbf\Psi_{\mathcal T}(\mathbf B)
\bigr) =
\sigma_{\min}
\bigl(
\mathbf H_{\mathcal T}\mathbf B^{T}
\bigr)\geq
\sigma_{\min}(\mathbf H_{\mathcal T})
\sigma_{\min}(\mathbf B^{T})=
\sigma_{\min}(\mathbf H_{\mathcal T})
\sigma_{\min}(\mathbf B).$
Hence, if
$\sigma_{\min}(\mathbf B)\geq\eta>0$, then $\sigma_{\min}
\bigl(
\mathbf\Psi_{\mathcal T}(\mathbf B)
\bigr)
\geq
\eta\,
\sigma_{\min}(\mathbf H_{\mathcal T})
>0,$
where the strict inequality follows from the nonsingularity of
$\mathbf H_{\mathcal T}$. This proves the result.
\end{proof}

\subsection{Proof of Lemma~\ref{lemma:scale_invariance}}
\label{app:scale_invariance}

\begin{proof}
Let $c\in\mathbb C\setminus\{0\}$ and define $\mathbf B_c
=
c\mathbf B.$
Since the noise-basis functions depend linearly on the coefficients of $\mathbf B$, $\psi_r(z;\mathbf B_c)
=
c\,\psi_r(z;\mathbf B),
\qquad r\in[t].
$
Consequently, for every coalition
$\mathcal T\subset[N]$ with $|\mathcal T|=t$,
\begin{equation}
\mathbf\Psi_{\mathcal T}(\mathbf B_c)
=
c\,\mathbf\Psi_{\mathcal T}(\mathbf B).
\label{eq:app_Psi_scaling}
\end{equation}

It follows that
\begin{align}
&\mathbf\Psi_{\mathcal T}(\mathbf B_c)
\mathbf\Psi_{\mathcal T}(\mathbf B_c)^{H}
\notag\\
&\qquad=
|c|^2
\mathbf\Psi_{\mathcal T}(\mathbf B)
\mathbf\Psi_{\mathcal T}(\mathbf B)^{H},
\notag
\end{align}
and therefore
\begin{align}
&
\Bigl(
\mathbf\Psi_{\mathcal T}(\mathbf B_c)
\mathbf\Psi_{\mathcal T}(\mathbf B_c)^{H}
\Bigr)^{-1}
\notag\\
&\qquad=
\frac{1}{|c|^2}
\Bigl(
\mathbf\Psi_{\mathcal T}(\mathbf B)
\mathbf\Psi_{\mathcal T}(\mathbf B)^{H}
\Bigr)^{-1}.
\notag
\end{align}

Since $\mathbf\Phi_{\mathcal T}(\mathbf A)$ is independent of
$\mathbf B$, it follows from the definition of
$\mathbf G_{\mathcal T}(\mathbf A,\mathbf B)$ that
\[
\mathbf G_{\mathcal T}(\mathbf A,\mathbf B_c)
=
\frac{1}{|c|^2}
\mathbf G_{\mathcal T}(\mathbf A,\mathbf B).
\]
Taking the maximum over all coalitions and diagonal entries gives
\begin{equation}
\Gamma_t(\mathbf A,\mathbf B_c)
=
\frac{1}{|c|^2}
\Gamma_t(\mathbf A,\mathbf B).
\label{eq:app_Gamma_scaling}
\end{equation}

Furthermore, from \eqref{eq:app_psi_scaling},
\begin{equation}
\sum_{r=1}^{t}
\left|
\psi_r(\alpha_i;\mathbf B_c)
\right|
=
|c|
\sum_{r=1}^{t}
\left|
\psi_r(\alpha_i;\mathbf B)
\right|.
\label{eq:app_noise_basis_sum_scaling}
\end{equation}

Using \eqref{eq:app_Gamma_scaling} and
\eqref{eq:app_noise_basis_sum_scaling}, the noise-basis term
appearing in the privacy-calibrated objective satisfies
\begin{align}
&
2r\theta
\sqrt{
\frac{\Gamma_t(\mathbf A,\mathbf B_c)}
{s^\star}
}
\sum_{r=1}^{t}
|\psi_r(\alpha_i;\mathbf B_c)|
\notag\\
&\quad=
2r\theta
\sqrt{
\frac{\Gamma_t(\mathbf A,\mathbf B)}
{|c|^2s^\star}
}
|c|
\sum_{r=1}^{t}
|\psi_r(\alpha_i;\mathbf B)|
\notag\\
&\quad=
2r\theta
\sqrt{
\frac{\Gamma_t(\mathbf A,\mathbf B)}
{s^\star}
}
\sum_{r=1}^{t}
|\psi_r(\alpha_i;\mathbf B)|.
\notag
\end{align}
The data-basis contribution to $\rho(\mathbf A,\mathbf B)$ is independent of $\mathbf B$. Hence the worker-wise objective is unchanged for every $i\in[N]$, and therefore
\begin{equation} 
\rho(\mathbf A,c\mathbf B)
= \rho(\mathbf A,\mathbf B).
\label{eq:app_rho_scale_invariance}
\end{equation}
It remains to establish the normalization claim. Let $(\mathbf A,\mathbf B)$ be any admissible design with $\mathbf B$ nonsingular. Then $\|\mathbf B\|_F>0$, and define $\widetilde{\mathbf B} = \frac{\mathbf B}{\|\mathbf B\|_F}.$ Clearly, $\|\widetilde{\mathbf B}\|_F=1.$
Moreover, $\widetilde{\mathbf B}$ remains nonsingular since it is obtained from $\mathbf B$ through multiplication by the nonzero scalar $1/\|\mathbf B\|_F$. Applying \eqref{eq:app_rho_scale_invariance} with $c=1/\|\mathbf B\|_F$ yields $\rho(\mathbf A,\widetilde{\mathbf B}) = \rho(\mathbf A,\mathbf B).$
Thus, every objective value attainable with a nonsingular $\mathbf B$ is also attainable with a normalized nonsingular matrix satisfying $\|\mathbf B\|_F=1$. Hence restricting the scale free basis design problem to normalized noise basis matrices does not alter its attainable objective values.
\end{proof}

\section*{AI Statement}
Generative AI tools were used carefully to assist with literature exploration, language editing, improving clarity and readability, and refining the presentation of the manuscript.

\end{document}